\newcommand{\sem}{[-]}
\newcommand{\means}{\overset{\sem}{\mapsto}}
\tikzstyle{gate}=[shape=rectangle, text height=1.5ex, text depth=0.25ex, yshift=0.5mm, fill=white, draw=black, minimum height=3mm, yshift=-0.5mm, minimum width=3mm, font={\small}, tikzit category=circuit]
\tikzstyle{big gate}=[shape=rectangle, text height=1.5ex, text depth=0.25ex, yshift=0.5mm, fill=white, draw=black, minimum height=10mm, yshift=-0.5mm, minimum width=5mm, font={\small}, tikzit category=circuit]
\tikzstyle{Z dot}=[inner sep=0mm, minimum size=2mm, shape=circle, draw=black, fill=white, tikzit category=zx]
\tikzstyle{Z phase dot}=[minimum size=5mm, font={\footnotesize\boldmath}, shape=rectangle, rounded corners=2mm, inner sep=0.2mm, outer sep=-2mm, scale=0.8, tikzit shape=circle, draw=black, fill=white, tikzit draw=blue, tikzit category=zx]
\tikzstyle{X dot}=[Z dot, shape=circle, draw=black, fill=gray!40!white, tikzit category=zx]
\tikzstyle{X phase dot}=[Z phase dot, tikzit shape=circle, tikzit draw=blue, fill=gray!40!white, font={\footnotesize\boldmath}, tikzit category=zx]
\tikzstyle{hadamard}=[fill=yellow, draw=black, shape=rectangle, inner sep=0.6mm, minimum height=1.5mm, minimum width=1.5mm, tikzit category=zx]
\tikzstyle{paulibox}=[fill={rgb,255: red,221; green,221; blue,255}, draw=black, shape=rectangle, inner sep=0.6mm, minimum height=5mm, minimum width=5mm, font={\footnotesize}, text height=1.5ex, text depth=0.25ex, tikzit category=zx]
\tikzstyle{vertex}=[inner sep=0mm, minimum size=1mm, shape=circle, draw=black, fill=black, tikzit category=misc]
\tikzstyle{vertex set}=[inner sep=0mm, minimum size=1mm, shape=circle, draw=black, fill=white, font={\footnotesize\boldmath}, tikzit category=misc]
\tikzstyle{small black dot}=[fill=black, draw=black, shape=circle, inner sep=0pt, minimum width=1.2mm, tikzit category=circuit]
\tikzstyle{cnot ctrl}=[fill=black, draw=black, shape=circle, inner sep=0pt, minimum width=1.2mm, tikzit category=circuit]
\tikzstyle{cnot targ}=[fill=white, draw=white, shape=circle, tikzit category=circuit, label={center:$\oplus$}, inner sep=0pt, minimum width=2.1mm, tikzit fill={rgb,255: red,102; green,204; blue,255}, tikzit draw=black]
\tikzstyle{ket}=[fill=white, draw=black, shape=regular polygon, regular polygon sides=3, regular polygon rotate=-30, scale=0.7, inner sep=1pt, tikzit category=circuit, tikzit shape=rectangle, tikzit fill=green]
\tikzstyle{bra}=[fill=white, draw=black, shape=regular polygon, regular polygon sides=3, regular polygon rotate=30, scale=0.7, inner sep=1pt, tikzit category=circuit, tikzit shape=rectangle, tikzit fill=red]
\tikzstyle{xket}=[fill=gray!40!white, draw=black, shape=regular polygon, regular polygon sides=3, regular polygon rotate=60, scale=0.7, inner sep=1pt, tikzit category=circuit, tikzit shape=rectangle, tikzit fill=green]
\tikzstyle{xbra}=[fill=gray!40!white, draw=black, shape=regular polygon, regular polygon sides=3, regular polygon rotate=0, scale=0.7, inner sep=1pt, tikzit category=circuit, tikzit shape=rectangle, tikzit fill=red]
\tikzstyle{scalar}=[shape=rectangle, text height=1.5ex, text depth=0.25ex, yshift=0.5mm, fill=white, draw=black, minimum height=5mm, yshift=-0.5mm, minimum width=5mm, font={\small}]
\tikzstyle{clabel}=[fill=white, draw=none, shape=rectangle, tikzit fill={rgb,255: red,56; green,255; blue,242}, font={\footnotesize}, inner sep=1pt, tikzit category=labels]
\tikzstyle{empty diagram}=[draw={gray!40!white}, dashed, shape=rectangle, minimum width=1cm, minimum height=1cm, tikzit category=misc]
\tikzstyle{amap}=[fill=white, draw=black, shape=NEbox, tikzit category=asymmetric, tikzit fill=yellow, tikzit shape=rectangle]
\tikzstyle{amap conj}=[fill=white, draw=black, shape=NWbox, tikzit category=asymmetric, tikzit fill=green, tikzit shape=rectangle]
\tikzstyle{amap adj}=[fill=white, draw=black, shape=SEbox, tikzit category=asymmetric, tikzit fill=red, tikzit shape=rectangle]
\tikzstyle{amap trans}=[fill=white, draw=black, shape=SWbox, tikzit category=asymmetric, tikzit fill=orange, tikzit shape=rectangle]
\tikzstyle{astate}=[fill=white, draw=black, shape=NEtriangle, tikzit category=asymmetric, tikzit shape=circle, tikzit fill=yellow]
\tikzstyle{astate conj}=[fill=white, draw=black, shape=NWtriangle, tikzit category=asymmetric, tikzit shape=circle, tikzit fill=green]
\tikzstyle{astate adj}=[fill=white, draw=black, shape=SEtriangle, tikzit category=asymmetric, tikzit shape=circle, tikzit fill=red]
\tikzstyle{astate trans}=[fill=white, draw=black, shape=SWtriangle, tikzit category=asymmetric, tikzit shape=circle, tikzit fill=orange]
\tikzstyle{xastate}=[fill=gray!40!white, draw=black, shape=NEtriangle, tikzit category=asymmetric, tikzit shape=circle, tikzit fill=yellow]
\tikzstyle{xastate conj}=[fill=gray!40!white, draw=black, shape=NWtriangle, tikzit category=asymmetric, tikzit shape=circle, tikzit fill=green]
\tikzstyle{xastate adj}=[fill=gray!40!white, draw=black, shape=SEtriangle, tikzit category=asymmetric, tikzit shape=circle, tikzit fill=red]
\tikzstyle{xastate trans}=[fill=gray!40!white, draw=black, shape=SWtriangle, tikzit category=asymmetric, tikzit shape=circle, tikzit fill=orange]
\tikzstyle{blue label}=[text=blue]
\tikzstyle{lmat}=[shape=signal, signal to=west, signal from=east, fill={zx_grey}, draw=black, minimum height=6pt, inner sep=.75pt, font={\scriptsize \boldmath}, tikzit fill=gray, tikzit category=GLA]
\tikzstyle{rmat}=[lmat, shape=signal, signal to=east, signal from=west, tikzit fill=gray, tikzit category=GLA]
\tikzstyle{dmat}=[lmat, shape=signal, signal to=west, signal from=east, tikzit fill=gray, tikzit category=GLA, rotate=90]
\tikzstyle{umat}=[lmat, shape=signal, signal to=east, signal from=west, tikzit fill=gray, tikzit category=GLA, rotate=90]
\tikzstyle{d_split}=[shape=trapezium, fill=white, draw=black, inner sep=0pt, trapezium stretches body, text width=15pt, text height=7pt]
\tikzstyle{d_merge}=[{d_split}, shape=trapezium, draw=black, rotate=180]
\tikzstyle{hadamard edge}=[-, dashed, dash pattern=on 2pt off 0.5pt, thick, draw={rgb,255: red,68; green,136; blue,255}]
\tikzstyle{box edge}=[-, dashed, dash pattern=on 2pt off 0.5pt, thick, draw={rgb,255: red,203; green,192; blue,225}]
\tikzstyle{brace edge}=[-, tikzit draw=blue, decorate, decoration={brace,amplitude=1mm,raise=-1mm}]
\tikzstyle{diredge}=[->]
\tikzstyle{double edge}=[-, double, shorten <=-1mm, shorten >=-1mm, double distance=2pt]
\tikzstyle{gray edge}=[-, {gray!60!white}]
\tikzstyle{pointer edge}=[->, very thick, gray]
\tikzstyle{boldedge}=[-, line width=1.6pt, shorten <=-0.17mm, shorten >=-0.17mm]
\tikzstyle{bidir edge}=[<->, very thick, draw={rgb,255: red,191; green,191; blue,191}]
\tikzstyle{blue edge}=[-, blue]
\newtheorem*{claim}{Claim}
\newtheorem{theorem}{Theorem}[section]
\newtheorem{prop}{Proposition}[section]
\newtheorem{corollary}{Corollary}[section]
\newtheorem{lemma}{Lemma}[section]
\theoremstyle{definition}
\newtheorem{definition}{Definition}[section]
\newtheorem{algo}{Algorithm}[section]
\DeclareMathOperator\tr{tr}
\title{The Qudit ZH Calculus for Arbitrary Finite Fields: \\ 
Universality and Application}
\author{Dichuan (David) Gao
\institute{Department of Computer Science\\
University of Oxford\\
Oxford, UK}
\email{dichuan.gao@maths.ox.ac.uk}
}
\begin{document}
\maketitle

\begin{abstract}
We propose a generalization of the graphical ZH calculus to qudits of prime-power dimensions $q = p^t$, implementing field arithmetic in arbitrary finite fields. This is an extension of a previous result \cite{roy_qudit_2023} which implemented arithmetic of prime-sized fields; and an alternative to a result in \cite{de_beaudrap_simple_2023} which extended the ZH to implement cyclic ring arithmetic in $\mathbb Z / q\mathbb Z$ rather than field arithmetic in $\mathbb F_q$. We show this generalized ZH calculus to be universal over matrices $\mathbb C^{q^n} \to \mathbb C^{q^m}$ with entries in the ring $\mathbb Z[\omega]$ where $\omega$ is a $p$th root of unity. As an illustration of the necessity of such an extension of ZH for field rather than cyclic ring arithmetic, we offer a graphical description and proof for a quantum algorithm for polynomial interpolation. This algorithm relies on the invertibility of multiplication, and therefore can only be described in a graphical language that implements field, rather than ring, multiplication. 
\end{abstract}

\section{Introduction}
\label{section:introduction}

The ZH calculus is one of three popular graphical calculi for reasoning about quantum computation; the other two being the ZX and the ZW calculi \cite{carette_recipe_2020}. Each of these three calculi are useful for different types of reasoning: the ZX calculus represents the Clifford+Phase gateset most naturally \cite{coecke_picturing_2017, coecke_interacting_2011, cowtan_phase_2020, van_de_wetering_zx-calculus_2020}; the ZW calculus is able to represent sums of linear maps naturally with its W gate \cite{ wang_differentiating_2022, shaikh_how_2023, koch_quantum_2022} and helps us to reason about photonic and fermionic computations \cite{de_felice_diagrammatic_2019, hadzihasanovic_algebra_2017, wang_non-anyonic_2021}; and the ZH calculus yields the Toffoli+Hadamard gateset in a much simpler way than the other two \cite{backens_zh_2019, backens_completeness_2023, roy_qudit_2023}. 

At its core, the ZH calculus is able to represent the Toffoli gate elegantly because one of its primitive generators, the H-box, implements the multiplication operation "$\cdot$" on the computational basis (otherwise called the AND gate in the qubit case). This then makes it easy to obtain the Toffoli gate 
$$T = \sum_{x, y, z \in \mathbb F_2} |x, y, z \oplus (x \cdot y) \rangle \langle x, y, z |$$
in a diagrammatic form. 

For qubits, the ZH calculus has been shown to be universal \cite{backens_zh_2019}; and an elegant set of rewrite rules in the ZH calculus has been proven complete \cite{backens_completeness_2023}. The universality result has in turn been generalized to qudits of prime dimension $p$ by Roy et.al. \cite{roy_qudit_2023}. In qudits of prime dimension, owing to the existence of a ring isomorphism $\mathbb F_p \cong \mathbb Z / p\mathbb Z$, the above-mentioned AND gate naturally generalizes to the field multiplication in $\mathbb F_p$. The resulting graphical calculus remains universal, although no completeness result has been given. 

This qudit ZH calculus was further generalized by de Beaudrap and East to arbitrary dimensions $d$. \cite{de_beaudrap_simple_2023}. In their generalization, the generalized AND gate becomes ring multiplication in $\mathbb Z/d\mathbb Z$. But of course, for arbitrary finite fields $\mathbb F_q$ where $q$ is not prime, we have $\mathbb F_q \not \cong \mathbb Z / q \mathbb Z$ as rings. Thus, the generalized ZH calculus by de Beaudrap and East would not naturally implement field multiplication in these non-prime finite fields. This is why we present an alternative generalization of the qudit ZH calculus - one which implements the field multiplication of $\mathbb F_q$, rather than the ring multiplication of $\mathbb Z/q \mathbb Z$. The price of faithfully representing field multiplication of arbitrary finite fields, however, is that in non-prime dimensions we lose flexsymmetry of the H-box \cite{backens_zh_2019}. However, as we shall see in sections \ref{subsection:field-operations} and \ref{section:polynomial-interpolation}, this loss of flexsymmetry is only up to a dualizer, which corresponds to an algebraic operation, which turns out to be useful. 

In section \ref{section:the-calculus} we construct the finite field ZH calculus from its basic generators, and give some simple results characterizing this calculus. We then prove, in section \ref{section:universality}, the universality of this calculus over matrices of $\mathbb Z[\omega]$ where $\omega$ is a $p$th root of unity. Finally in section \ref{section:polynomial-interpolation} we demonstrate the usefulness of a finite field ZH calculus by diagrammatically proving the correctness of a quantum polynomial interpolation algorithm, which relies on the invertibility of multiplication in the computational basis - that is, the fact that it is a field. 

\section{The ZH Calculus for Arbitrary Finite Fields}
\label{section:the-calculus}

\subsection{Algebraic Background} 
\label{subsection:algebraic-background}

It is well known that if $p$ is a prime number, and $q = p^t$ for some $t \in \mathbb N$, then there exists a field $\mathbb F_q$ of cardinality $q$ and character $p$, namely the splitting field of the polynomial $g(x) = x^q - x$ over the prime field $\mathbb F_p$. 
Every finite field is of this form \cite{lang_algebra_2002}. Since splitting fields are unique up to isomorphism, we simply speak of $\mathbb F_q$ as \textit{the} finite field of size $q$. 

When $t = 1$, we have $\mathbb F_q = \mathbb F_p \cong \mathbb Z / p\mathbb Z$. But when $t \neq 1$, the field $\mathbb F_q$ is not cyclic, meaning $\mathbb F_q \not \cong \mathbb Z / q\mathbb Z$ as rings. There is then no group embedding $\mathbb F_q \hookrightarrow S^1$ where $S^1$ is the multiplicative group of the unit circle in the complex plane. This is a problem for the construction of a ZH calculus: one wants to define the H gate as the map 
$$\tikzfig{H-spider} \overset{?}{=} \frac{1}{\sqrt{q}}\sum_{\substack{j_1, ..., j_m \in \mathbb F_q \\ i_1, ..., i_n \in \mathbb F_q}} \omega^{j_1 ... j_m i_1 ... i_n} | j_1 \rangle ... | j_m \rangle \langle i_1 | ... \langle i_n |$$
where $\omega$ is a root of unity. But this is nonsensical, for what does it mean to raise $\omega$ to the power of an element of a field $\mathbb F_q$ which has no interpretation as $\mathbb Z / q\mathbb Z$? To make sense of that, there needs to be a map $j \mapsto \omega^j$ which is a group embedding $\mathbb F_q \hookrightarrow S^1$; but there is no such embedding. This is essentially the complaint in Roy et.al. that prevented a full generalization of ZH to arbitrary dimensions \cite{roy_qudit_2023}. 

This difficulty can be overcome, however, if we relax the requirement for a group \textit{embedding}, and consider how the standard definition of the Galois Qudit Pauli group \cite{gottesman_surviving_2024} reduces the information of $\mathbb F_q$ to $\mathbb F_p$ in its definition of the Pauli Z. We briefly recall that construction here: 

\begin{definition}
    For a finite field $\mathbb F_q$ with $q = p^t$, the field trace is a $\mathbb F_p$ linear map $\tr: \mathbb F_q \to \mathbb F_p$, defined by 
    \begin{equation}
        \tr x := x + x^p + \dots + x^{p^{(t-1)}}
    \end{equation}
\end{definition}

Then the Pauli Group of a Galois Qudit is defined by the following gates: 

\begin{definition} (\cite{gottesman_surviving_2024})
    For $\beta \in \mathbb F_q$, define $X^\beta, Z^\beta: \mathbb C^q \to \mathbb C^q$ as acting on the computational basis, labelled with elements of $\mathbb F_q$, as: 
    \begin{equation}
        \begin{split}
            X^\beta|\gamma\rangle &:= |\gamma + \beta\rangle \\ 
            Z^\beta|\gamma\rangle &:= \omega^{\tr(\beta\gamma)}|\gamma\rangle 
        \end{split}
    \end{equation}
    where $\gamma \in \mathbb F_q$, and $\omega = \exp(2\pi i / p)$ is a $p$th root of unity. 
\end{definition}

This suggests that the H-box for Galois Qudits can also be defined in terms of the field trace. Indeed the following essential feature of the H-box holds: 
\begin{lemma}
    \label{lemma:sum-on-circle}
    If $j \in \mathbb F_q$; and $\omega$ is a primitive $p$th root of unity in $\mathbb C$, then 
    $$\sum_{k \in \mathbb F_q} \omega^{\tr(jk)} = \begin{cases}
        0 &j \neq 0 \\ 
        q &j = 0
    \end{cases}$$
\end{lemma}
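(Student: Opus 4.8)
The plan is to reduce the claim to the familiar fact that a nontrivial character of a finite abelian group sums to zero over the group. First I would fix $j \in \mathbb F_q$ and consider the function $\chi_j : \mathbb F_q \to \mathbb C$ defined by $\chi_j(k) = \omega^{(j \mid k)}$. Since $(-\mid-)$ is $\mathbb F_p$-bilinear, it is in particular additive in its second argument, so $(j \mid k + k') = (j \mid k) + (j \mid k')$ in $\mathbb F_p$, and hence $\chi_j(k + k') = \omega^{(j\mid k) + (j \mid k')} = \chi_j(k)\chi_j(k')$ — the exponent is well-defined because $\omega$ is a primitive $p$th root of unity and the exponent lives in $\mathbb F_p \cong \mathbb Z/p\mathbb Z$. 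Thus $\chi_j$ is a group homomorphism from the additive group $(\mathbb F_q, +)$ to $S^1 \subset \mathbb C^\times$, i.e. a character.

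The case $j = 0$ is immediate: $(0 \mid k) = 0$ for all $k$, so every summand equals $\omega^0 = 1$ and the sum is $|\mathbb F_q| = q$. For the case $j \neq 0$, I would invoke the standard orthogonality argument: if $\chi_j$ is nontrivial, pick $k_0$ with $\chi_j(k_0) \neq 1$, then $\chi_j(k_0) \sum_{k} \chi_j(k) = \sum_k \chi_j(k_0 + k) = \sum_k \chi_j(k)$ by reindexing the sum over the group, so $(\chi_j(k_0) - 1)\sum_k \chi_j(k) = 0$, forcing $\sum_k \chi_j(k) = 0$. So the only thing that needs checking is that $\chi_j$ is nontrivial whenever $j \neq 0$, i.e. that there exists $k$ with $(j \mid k) \neq 0$ in $\mathbb F_p$.

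This nondegeneracy of the bilinear form $(-\mid-)$ is the one substantive point, and I expect it to be the main (minor) obstacle. It follows because $(-\mid-)_\kappa$ is the coordinate dot-product with respect to the basis $\{1, \kappa, \dots, \kappa^{t-1}\}$: if $j = \sum_i j_i \kappa^i \neq 0$, then some coordinate $j_i \neq 0$, and taking $k = \kappa^i$ gives $(j \mid k) = j_i \neq 0$. Alternatively, one can phrase this via Definition~\ref{def:trans-mult}: since $j \neq 0$, the map $M_j$ is bijective, hence so is $M_j^\top$, so $M_j^\top$ is not the zero map and there is $y$ with $M_j^\top y \neq 0$; but actually the direct coordinate argument is cleanest and avoids needing the transpose. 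With nondegeneracy established, $j \neq 0$ gives a nontrivial character and the orthogonality computation closes the case, completing the proof.
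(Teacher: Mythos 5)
Your proof is correct, but it takes a different route from the paper's. The paper argues via the transpose multiplication of Definition~\ref{def:trans-mult}: writing $(j\mid k) = (M_j 1 \mid k) = (1 \mid M_j^\top k)$ and using bijectivity of $M_j^\top$ for $j \neq 0$ to change variables, the sum becomes $\sum_{k'} \omega^{(1\mid k')}$, which vanishes because $(1\mid -):\mathbb F_q \to \mathbb F_p$ is a surjection with kernel of size $p^{t-1}$, so the sum collapses to $p^{t-1}\sum_{\theta=0}^{p-1}\omega^\theta = 0$. You instead observe that $k \mapsto \omega^{(j\mid k)}$ is an additive character of $(\mathbb F_q,+)$, prove it is nontrivial for $j\neq 0$ by nondegeneracy of the coordinate dot-product (taking $k=\kappa^i$ for a nonzero coordinate $j_i$), and finish with the standard shift/orthogonality trick. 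Your nondegeneracy step is the right thing to isolate and your coordinate argument for it is sound, so there is no gap. What each approach buys: yours is self-contained at the level of the bilinear form and avoids invoking $M_j^\top$ at all, making the lemma a pure instance of character orthogonality; the paper's version exercises the transpose-multiplication map, which is machinery it needs anyway (e.g.\ for Proposition~\ref{prop:trans-mult} and Corollary~\ref{cor:dualizer}), and reduces the vanishing to an explicit fiber count rather than the reindexing identity. Either proof would serve the rest of the paper equally well.
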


\begin{proof}
    First suppose $j \neq 0$. Then the map $k \mapsto jk$ is bijective on $\mathbb F_q$. Therefore: 
    $$\sum_{k \in \mathbb F_q} \omega^{\tr (jk)} = \sum_{k' \in \mathbb F_q} \omega^{\tr(k')} = p^{t-1} \cdot \sum_{\theta = 0}^{p-1} \omega^\theta = 0$$
    where the second equality follows from the fact that the field trace is surjective and linear over $\mathbb F_p$. 
    
    The other case where $j = 0$ is straight forward as $\tr(0) = 0$. 
\end{proof}

Before we move onto defining our graphical calculus, we remind the reader of the following algebraic theorem:

\begin{theorem}
    \label{thm:existence_of_xi} 
    (Cohen and Hachenberger 1999 \cite{cohen_primitive_1999}). For any finite field $\mathbb F_q$ with $q = p^t$, there exists an element $\xi \in \mathbb F_q$ such that: 
    \begin{enumerate}
        \item $\xi$ is primitive; i.e. $\{1, \xi, \dots, \xi^{q-2}\} = \mathbb F_q^\times$; 
        \item $\xi$ is normal; i.e. $\{\xi^{p^0}, \dots, \xi^{p^{t-1}}\}$ forms a basis of $\mathbb F_q$ over $\mathbb F_p$; 
        \item $\tr \xi = 1$.  
    \end{enumerate}
\end{theorem}

\subsection{Generators of the Finite Field ZH} 
\label{subsection:generators}

\begin{definition}
    For a field $\mathbb F_q$ with $q = p^t$, and for a particular choice of a primitive, normal, trace-one element $\xi \in \mathbb F_q$ (which exists by virtue of \ref{thm:existence_of_xi}), the Galois-Qudit ZH calculus over $\mathbb F_q$ is the dagger-PROP generated by three families of generators, with the following intended semantic maps: 
    \begin{equation*}
        \tikzfig{Z-spider} \mapsto \sum_{i \in \mathbb F_q} |i \rangle^{\otimes m} \langle i |^{\otimes n} 
    \end{equation*} 
    \begin{equation*}
        \tikzfig{H-spider} \mapsto \frac{1}{\sqrt{q}}\sum_{\substack{j_1, ..., j_m \in \mathbb F_q \\ i_1, ..., i_n \in \mathbb F_q}} \omega^{\tr(j_1...j_m i_1...i_n)} | j_1 \rangle ... | j_m \rangle \langle i_1 | ... \langle i_n |
    \end{equation*}
    \begin{equation*}
        \tikzfig{xi-state} \mapsto \sqrt{q}|\xi\rangle. 
    \end{equation*}
    where $\omega = e^{2\pi i/p}$. The first family is called the Z-spider, the second family is called the H-spider or H-box, and the third is called the $\xi$-state or $\xi$-lollipop. 
\end{definition}

Notice that only the specific state $|\xi\rangle$ is given the privileged status of being a generator of our calculus. A reader familiar with the ZH calculus for qubits may wonder why this is so. One answer is that this is the "stamp" in which the none-uniqueness of the vector-space-structure of $\mathbb F_q$ over $\mathbb F_p$ is encoded. A more pragmatic answer is made clear in section \ref{subsection:field-operations}, and particularly lemma \ref{lemma:basis_states_are_in_ZH}: in the case of non-cyclic fields, this extra generator is needed for universality. 

The Z-spider here is just the usual Z-spider in the ZX-calculus, so all the usual properties from ZX-calculus hold here. The H-spider is a kind of generalized discrete Fourier transform. Thus we immediately obtain a derived generator corresponding to the inverse Fourier transform: 

\begin{prop}
    The family of H-spiders is self-transpose (up to switching $m$ and $n$); therefore its conjugate is its adjoint, and is given by "inverse Fourier transform": 
    \begin{equation*}
        \tikzfig{H-dagger} \mapsto \frac{1}{\sqrt{q}}\sum_{\substack{j_1, ..., j_m \in \mathbb F_q \\ i_1, ..., i_n \in \mathbb F_q}} \omega^{-\tr( j_1 ... j_m i_1 ... i_n )} | j_1 \rangle ... | j_m \rangle \langle i_1 | ... \langle i_n |
    \end{equation*}
\end{prop}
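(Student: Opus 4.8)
The plan is to argue directly from the explicit formula defining the H-spider, since all three assertions — self-transpose up to swapping leg-sets, conjugate equals adjoint, and the ``inverse Fourier transform'' shape — follow at once from two elementary facts already available: the symmetry $(a\mid b)=(b\mid a)$ of the bilinear pairing, and the identity $\overline{\omega}=\omega^{-1}$, which holds because $\omega$ is a root of unity (together with $\tfrac{1}{\sqrt q}\in\mathbb R$).

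First I would check self-transposedness. Transposition sends $\ket{j_1}\cdots\ket{j_m}\bra{i_1}\cdots\bra{i_n}$ to $\ket{i_1}\cdots\ket{i_n}\bra{j_1}\cdots\bra{j_m}$ and leaves the scalar untouched, so
$$\left(\tikzfigless\right)^{\!\top} = \frac{1}{\sqrt q}\sum_{\substack{j_1,\dots,j_m\\ i_1,\dots,i_n}} \omega^{(j_1\cdots j_m\mid i_1\cdots i_n)}\,\ket{i_1}\cdots\ket{i_n}\bra{j_1}\cdots\bra{j_m}.$$
Relabelling the dummy indices so that the $i$'s become the new output tuple (of length $n$) and the $j$'s the new input tuple (of length $m$), and using that $j_1\cdots j_m$ and $i_1\cdots i_n$ are each a single element of $\mathbb F_q$, the symmetry of $(-\mid-)$ lets me swap the two arguments of the pairing; the coefficient becomes $\omega^{(j'_1\cdots j'_n\mid i'_1\cdots i'_m)}$, which is exactly the coefficient of the H-box with $n$ inputs and $m$ outputs. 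Hence the H-box is self-transpose up to interchanging its two sets of legs. (The display above is schematic: I only use that transposition is the entrywise swap $\ket{j}\bra{i}\mapsto\ket{i}\bra{j}$.)

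Next, conjugation acts entrywise, and since $\tfrac1{\sqrt q}$ is real and $\overline{\omega^{a}}=\omega^{-a}$ for every $a\in\mathbb F_p$, conjugating the H-box simply replaces $\omega^{(\cdots)}$ by $\omega^{-(\cdots)}$, giving precisely the formula in the statement. Finally I combine the two: since the adjoint is the conjugate of the transpose, and the transpose merely interchanges the leg-sets (step one), the adjoint equals the conjugate (up to that same harmless interchange, which is invisible in the diagrammatic presentation), namely the inverse Fourier transform as displayed.

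The hard part is essentially nonexistent; the only points deserving care are (i) parsing $\omega^{(j_1\cdots j_m\mid i_1\cdots i_n)}$ correctly — one first forms the single products $j_1\cdots j_m\in\mathbb F_q$ and $i_1\cdots i_n\in\mathbb F_q$ and only then applies the pairing, so that commutativity of multiplication in $\mathbb F_q$ together with the symmetry of $(-\mid-)$ genuinely licenses the swap — and (ii) noting that $\omega\mapsto\overline\omega$ inverts the H-box scalars, which is just $|\omega|=1$. I would phrase the proof so that these two observations carry the whole argument, rather than appealing to any flexsymmetry of the pairing (which, as the introduction stresses, fails in the non-prime case).
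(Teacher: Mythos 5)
Your proposal is correct and follows essentially the same route as the paper: self-transposedness from the symmetry of the pairing $(-\mid-)$ (plus commutativity of multiplication in $\mathbb F_q$), and the adjoint/conjugate formula from entrywise complex conjugation using $\overline{\omega}=\omega^{-1}$ and the realness of $\tfrac{1}{\sqrt q}$. The paper states this in two sentences; you simply spell out the same argument in more detail.
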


It follows also that the one-in-one-out H-spider is unitary, since its adjoint is manifestly its inverse. 

\begin{proof}
    The fact that the family of H-spider is self-transpose is just another way to say that multiplication is commutative in a field. The above equation describing the adjoint and conjugate of the H-spider amounts to taking the complex conjugate of each of its matrix entries. 
\end{proof}

These are the primitive generators of the phaseless Galois-Qudit ZH calculus. The \textit{phased} (or \textit{labelled}) Galois-Qudit ZH calculus allows further labelling of the H-box by elements of a ring: 

\begin{definition}
    For a field $\mathbb F_q$ with $q = p^t$, and $\omega = e^{2\pi i/p}$, let $R$ be a ring with $\mathbb Z[\omega] \subset R$. The $R$-\textbf{phased Galois-Qudit ZH-calculus} on $\mathbb F_q$ is a family of linear maps generated by the phaseless Galois-Qudit ZH-calculus on $\mathbb F_q$, together with labelled H-spiders of the following form: for every $r \in R$, 
    $$\tikzfig{Hr-spider} \mapsto \frac{1}{\sqrt{q}}\sum_{\substack{j_1, ..., j_m \in \mathbb F_q \\ i_1, ..., i_n \in \mathbb F_q}} r^{\tr ( j_1 ... j_m  i_1 ... i_n )} | j_1 \rangle ... | j_m \rangle \langle i_1 | ... \langle i_n |$$
\end{definition}

In both the phaseless and the phased calculi we have some useful gadgets derived from the generators described above:
    \begin{equation*}
        \tikzfig{X-spider} = \tikzfig{X-spider-constr} \hspace{15mm}
        \tikzfig{scalar-up} = \tikzfig{scalar-up-constr} \hspace{15mm} \tikzfig{scalar-down} = \tikzfig{scalar-down-constr}
    \end{equation*}
The left-most one is called the X-spider, while the other two are the scalars. By taking repeated tensor-products of the two scalars above, we can construct any scalars of form $q^{\pm \frac{n}{2}}$. 

\subsection{Finite Field Arithmetics from ZH} 
\label{subsection:field-operations}

As we mentioned in the introduction (\ref{section:introduction}), the power of the ZH calculus comes in part from the fact that it effortlessly implements both addition (XOR) and multiplication (AND) on the computational basis. We now describe the way in which this holds in our ZH calculus for non-prime sized finite fields. All propositions in this section are proven in appendix \ref{append:arithmetics}. 

\begin{prop}
    \label{prop:neg} 
    The 1-to-1 X-spider is the same as applying the H-box twice, and acts on the computational basis as negation: 
    \begin{equation}
        \label{eq:neg}
        \tikzfig{x-neg} = \tikzfig{H2-neg} = \sum_{i \in \mathbb F_q} |-i\rangle \langle i|  \tag{neg}
    \end{equation}
\end{prop}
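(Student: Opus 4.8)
The plan is to obtain the first equality directly from the definition of the X-spider and then verify the second equality by a short matrix computation resting on Lemma~\ref{lemma:sum-on-circle}. For the first equality: by the construction of the X-spider as a Z-spider conjugated by H-boxes on every leg, the $1$-to-$1$ X-spider unfolds to $H \circ Z_{1,1} \circ H$, where $Z_{1,1} = \sum_{i \in \mathbb{F}_q} |i\rangle\langle i|$ is simply the identity on $\mathbb{C}^q$. Hence the $1$-to-$1$ X-spider is literally $H \circ H$, the H-box applied twice, as an equality of diagrams (the two factors of $\tfrac{1}{\sqrt{q}}$ will be absorbed in the next step, so there is no stray scalar to track).

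For the second equality, write the $1$-to-$1$ H-box as $H = \tfrac{1}{\sqrt{q}}\sum_{j,i \in \mathbb{F}_q} \omega^{(j \mid i)} |j\rangle\langle i|$ and compose two copies, summing over the intermediate basis index $k$:
$$H^2 = \frac{1}{q} \sum_{i,j \in \mathbb{F}_q} \Big( \sum_{k \in \mathbb{F}_q} \omega^{(j \mid k) + (k \mid i)} \Big) |j\rangle\langle i|.$$
Since $(-\mid-)$ is symmetric and bilinear, the exponent rewrites as $(k \mid j) + (k \mid i) = (k \mid i + j)$, and this is well-defined as an exponent of $\omega$ because $(-\mid-)$ takes values in $\mathbb{F}_p$ while $\omega$ is a $p$th root of unity. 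Applying Lemma~\ref{lemma:sum-on-circle} to the inner sum $\sum_{k} \omega^{(i+j \mid k)}$ yields $q$ when $i+j = 0$ and $0$ otherwise, so $H^2 = \sum_{i \in \mathbb{F}_q} |-i\rangle\langle i|$, which is the stated negation map.

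The computation is short and its only real ingredient is Lemma~\ref{lemma:sum-on-circle}, so I do not anticipate a serious obstacle; the point to be careful about is bookkeeping rather than depth. Specifically, I want to check that only \emph{symmetry} and bilinearity of $(-\mid-)$ are used — not flexsymmetry, which fails in non-prime dimensions — so that the argument is valid uniformly in $q$; and I want to state explicitly that $Z_{1,1}$ is the identity and that the X-spider gadget used here has plain H-boxes (not H-adjoints) on its legs, since this is exactly what makes the first equality hold on the nose and forces $H^2$, rather than the identity, to appear.
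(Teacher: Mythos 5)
Your proposal is correct and follows essentially the same route as the paper's own proof: unfold the $1$-to-$1$ X-spider into $H \circ Z_{1,1} \circ H$, use the identity rule to reduce to $H^2$, and then evaluate the matrix entries by combining exponents via symmetry and bilinearity of $(-\mid-)$ and invoking Lemma~\ref{lemma:sum-on-circle}. Your extra bookkeeping remarks (no stray scalar, only symmetry/bilinearity rather than flexsymmetry is used) are accurate and consistent with the paper.
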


\begin{prop}
    \label{prop:add} 
    The 2-to-1 X-spider followed by a 1-to-1 X-spider implements addition in $\mathbb F_q$, and the 0-to-1 X-spider is the additive identity $0 \in \mathbb F_q$ 
    \begin{equation}
        \label{eq:add}
        \tikzfig{Addition} = \sum_{i, j \in \mathbb F_q}|i+j\rangle \langle i | \langle j |  \hspace{15mm} \tikzfig{x-state} = |0\rangle \tag{+}
    \end{equation}
\end{prop}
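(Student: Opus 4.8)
The plan is to unfold both X-spiders using their definition as Z-spiders with a one-in-one-out H-box glued onto each leg, and then evaluate the resulting composites on computational basis states, using only the bilinearity and symmetry of $(-\mid-)$ together with Lemma \ref{lemma:sum-on-circle} to carry out the Fourier-type sums. Because the $1$-to-$1$ Z-spider is the identity, this description already forces the $1$-to-$1$ X-spider to equal $H^2$, consistently with Proposition \ref{prop:neg}, so no orientation ambiguity in the H-boxes remains. For the additive-identity half of the statement I would evaluate the $0$-to-$1$ X-spider, i.e.\ the H-box applied to the $0$-to-$1$ Z-spider $\sum_i|i\rangle$: this gives $\tfrac1{\sqrt q}\sum_{a,i}\omega^{(a\mid i)}|a\rangle$, and Lemma \ref{lemma:sum-on-circle} collapses the $i$-sum to $q\,\delta_{a,0}$, leaving $\sqrt q\,|0\rangle$; the $\tfrac1{\sqrt q}$ scalar gadget in $\tikzfig{x-state}$ then produces $|0\rangle$ exactly.

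For the addition claim I would compute the bare $2$-to-$1$ X-spider on $|i\rangle|j\rangle$ in three moves: the two input H-boxes give $\tfrac1q\sum_{a,b}\omega^{(a\mid i)+(b\mid j)}|a\rangle|b\rangle$; the Z-spider fuses $a=b$, and bilinearity rewrites the exponent as $(a\mid i+j)$; the output H-box introduces a fresh label $c$ contributing $(c\mid a)=(a\mid c)$, so that summing over $a$ produces $(a\mid i+j+c)$ and Lemma \ref{lemma:sum-on-circle} forces $c=-(i+j)$. Collecting the prefactors ($\tfrac1q$ from the inputs, $\tfrac1{\sqrt q}$ from the output, one $q$ from the surviving sum) yields the $2$-to-$1$ X-spider as $\tfrac1{\sqrt q}\sum_{i,j}|{-}(i+j)\rangle\langle i|\langle j|$. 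Post-composing with the $1$-to-$1$ X-spider, which is negation by Proposition \ref{prop:neg}, sends $-(i+j)\mapsto i+j$, and the $\sqrt q$ scalar gadget in $\tikzfig{Addition}$ absorbs the residual $\tfrac1{\sqrt q}$, leaving exactly $\sum_{i,j}|i+j\rangle\langle i|\langle j|$.

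I do not anticipate a genuine obstacle here: these are the $\mathbb F_q$-analogues of the familiar fact that Fourier-conjugating the Z-spider turns ``copy/compare'' into ``add'', with Lemma \ref{lemma:sum-on-circle} in the role of character orthogonality. The one thing that needs care is the scalar bookkeeping -- checking that the $q^{\pm 1/2}$ factors coming from the H-box normalisation and the explicit scalar gadgets drawn inside $\tikzfig{Addition}$ and $\tikzfig{x-state}$ combine to leave no stray coefficient. It is also worth spelling out why the extra $1$-to-$1$ X-spider appears in the construction at all: unlike the qubit case, where $-1=1$, the Fourier dual of the Z-spider carries a sign, yielding $-(i+j)$, and the negation supplied by $H^2$ is precisely the correction that removes it.
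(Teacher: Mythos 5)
Your proposal is correct and follows essentially the same route as the paper's appendix proof: unfold the X-spiders into H-conjugated Z-spiders, evaluate on computational basis inputs, and collapse the character sums via bilinearity of $(-\mid-)$ and Lemma \ref{lemma:sum-on-circle}, with the $q^{\pm 1/2}$ normalisations cancelling exactly as in the paper's prefactor bookkeeping. The only cosmetic difference is that you compute the bare $2$-to-$1$ X-spider first and then apply the negation of Proposition \ref{prop:neg}, whereas the paper merges the output H-boxes into a single $H^\dagger$ and does one direct calculation.
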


\begin{prop}
    \label{prop:mult}
    The 2-to-1 H-spider followed by an $H^\dag$ implements multiplication in $\mathbb F_q$, and the 0-to-1 H-spider followed by an $H^\dag$ is the multiplicative identity $1 \in \mathbb F_q$ 
    \begin{equation}
        \label{eq:mult}
        \tikzfig{Multiplication} = \sum_{i, j \in \mathbb F_q}|i\cdot j\rangle \langle i | \langle j | \hspace{15mm} \tikzfig{H-state} = |1\rangle \tag{m}
    \end{equation}
\end{prop}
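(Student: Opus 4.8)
The plan is to prove both equations by unfolding the defining sums and contracting the internal wire using Lemma~\ref{lemma:sum-on-circle}; no diagrammatic rewriting is needed.

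For the multiplication identity, I would write the $2$-to-$1$ H-spider as $\tfrac{1}{\sqrt q}\sum_{j,i_1,i_2}\omega^{(j\mid i_1 i_2)}|j\rangle\langle i_1|\langle i_2|$ and, invoking the preceding proposition on $H^\dagger$, the $1$-to-$1$ $H^\dagger$ as $\tfrac{1}{\sqrt q}\sum_{k,j}\omega^{-(k\mid j)}|k\rangle\langle j|$. Composing these and summing over the contracted index $j$ gives $\tfrac{1}{q}\sum_{k,i_1,i_2}\bigl(\sum_j\omega^{(j\mid i_1 i_2)-(k\mid j)}\bigr)|k\rangle\langle i_1|\langle i_2|$. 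The key move is to use the symmetry $(k\mid j)=(j\mid k)$ together with $\mathbb F_p$-linearity of $(-\mid-)$ in its second slot to collapse the exponent to $(j\mid i_1 i_2 - k)$; then Lemma~\ref{lemma:sum-on-circle} (applied with the roles of the two arguments swapped, which is legitimate by symmetry) says the inner sum equals $q$ if $k=i_1 i_2$ and $0$ otherwise. The surviving scalar is $\tfrac{1}{q}\cdot q = 1$, leaving $\sum_{i_1,i_2}|i_1 i_2\rangle\langle i_1|\langle i_2|$.

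The second identity is the same computation with no input legs: in the H-spider formula the product over an empty family of inputs is the multiplicative unit $1\in\mathbb F_q$, so the $0$-to-$1$ H-spider is $\tfrac{1}{\sqrt q}\sum_j\omega^{(j\mid 1)}|j\rangle$; post-composing with the $1$-to-$1$ $H^\dagger$ and summing over the contracted index yields $\tfrac{1}{q}\sum_k\bigl(\sum_j\omega^{(j\mid 1-k)}\bigr)|k\rangle$, which Lemma~\ref{lemma:sum-on-circle} collapses to $|1\rangle$.

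I do not expect a genuine obstacle here; the computation is routine. The two points to get right are (i) the convention that the empty product in $\mathbb F_q$ is $1$ — this is precisely why the nullary H-box gives $|1\rangle$ and not $|0\rangle$, in contrast to the additive case of Proposition~\ref{prop:add} — and (ii) that the exponent manipulation uses only bilinearity and symmetry of $(-\mid-)$, never multiplicativity: since $(ab\mid c)\neq(a\mid bc)$ in general, the product $i_1 i_2$ must stay inside a single argument of the form throughout, so in particular no flexsymmetry of the H-box is invoked. It is also worth noting that one really needs $H^\dagger$ rather than $H$: by the computation behind Proposition~\ref{prop:neg}, $H$ composed with $H$ is negation, not the identity, when $q>2$.
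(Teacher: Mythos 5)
Your proof is correct and follows essentially the same route as the paper's: unfold the defining sums for the H-spider and $H^\dagger$, combine the exponents using symmetry and bilinearity of $(-\mid-)$, and contract the internal index with Lemma~\ref{lemma:sum-on-circle}, with the nullary case handled by the empty-product-equals-$1$ convention. The only cosmetic difference is that the paper evaluates the composite on computational basis inputs $|i\rangle|j\rangle$ while you write out the full operator sum.
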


With the arithmetic operations in hand, we can therefore represent the natural action of the Galois group $\Gamma_{q/p}$ on the qudit of dimension $q$ by: 
\begin{equation}
    \Gamma_{q/p} \cong \left\{\tikzfig{galois-group-rep} \middle \vert \tau = 0, \dots, t-1\right\}
\end{equation}
and so we can implement the following field-trace map: 
\begin{equation}
    \tikzfig{field-trace} \quad := \quad \tikzfig{field-trace-constr} \quad \means \quad \sum_{j \in \mathbb F_q} |\tr j\rangle \langle j |. 
\end{equation}
Note that this field-trace map is a classical map (it maps the computational basis to the computational basis), and therefore is self-conjugate. But it is not injective, and therefore we introduce the diagrammatic asymmetry to denote its directionality.




A major difference between the Galois-Qudit ZH and the prime-dimensional ZH is that the additive structure on the computational basis is non-cyclic; and therefore one \textit{cannot} obtain all computational basis states by simply starting from $|0\rangle$ and applying the Pauli X gate to it. 
Instead we construct them as follows: 

\begin{lemma}
    \label{lemma:basis_states_are_in_ZH}
    For any Galois field $\mathbb F_q$ and any $j \in \mathbb F_q$, there exists a diagram in the phase-free Galois-Qudit ZH calculus that represents $|j\rangle$. 
\end{lemma}

\begin{proof}
    We have already remarked in proposition \ref{prop:add} that the scaled X-lollipop $\tikzfig{x-state}$ is the basis state $|0\rangle$. So it suffices to construct all the basis states corresponding to nonzero elements of $\mathbb F_q$. 

    Since $\xi$ is a primitive element of $\mathbb F_q$, so the powers of $\xi$ range over all the nonzero elements: $\{1, \xi, \dots, \xi^{q-2}\} = \mathbb F_q^\times$. Thus the following states range over all the nonzero basis states: 
    $$\left\{\tikzfig{nth-power-of-xi} \, \mid \, n = 0, \dots, q-2 \right\} \mapsto \{|j\rangle \, \mid \, j \in \mathbb F_q^\times\}.$$
\end{proof}


Since, then, every computational basis state is constructible in the phase-free finite field ZH calculus, we will simply denote 
\begin{equation}
\label{eq:basis-lollipop-def}
    \tikzfig{comp-basis-state} := q^{\frac{1}{2}}|j\rangle \hspace{10mm} \tikzfig{white-comp-basis-state} := \tikzfig{white-comp-basis-state-constr} \tag{bas}
\end{equation}
for any $j \in \mathbb F_q$. We emphasize that these decorated lollipops are in the phase-\textit{free} calculus. The Z-lollipop states are \textit{not} self-conjugate; and so their transposes are not their adjoints. Indeed the conjugate of a Z-lollipop state is 
$$\overline{\left(\tikzfig{white-comp-basis-state}\right)} = \tikzfig{white-basis-state-conj-constr} = \tikzfig{white-basis-state-conj-constr2} = \tikzfig{white-basis-state-conj}$$
The following corollary follows directly from the definitions of the X- and Z-lollipops: 
\begin{corollary}
    \label{cor:lollipops_as_fourier_bases}
    The X- and Z-lollipop states, up to a factor of $\sqrt{q}$, are orthonormal bases of $\mathbb C^q$, and are Fourier with respect to each other. That is: 
    $$\tikzfig{z-inner-prod} = \tikzfig{x-inner-prod} = q\delta_i^j \hspace{10mm} \tikzfig{xz-inner-prod} = q^{\frac{1}{2}}\omega^{-\tr(ij)}. $$
\end{corollary}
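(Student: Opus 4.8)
The statement is a direct computation, so the plan is simply to unfold each of the three closed diagrams into the definitions of the lollipop states from \eqref{eq:basis-lollipop-def} and evaluate the resulting finite sums over $\mathbb F_q$. Recall that one family of lollipops is $\sqrt{q}\,|j\rangle$ (the scaled computational-basis states, which are real and hence self-conjugate) and the other is its image under the one-in-one-out H-box, namely $\sqrt{q}\,H|j\rangle = \sum_{k \in \mathbb F_q}\omega^{(j\mid k)}|k\rangle$, whose conjugate $\sum_{k}\omega^{-(j\mid k)}|k\rangle$ is exactly the diagram computed in the display immediately preceding the corollary. Each of the three equations is the Hermitian inner product of two lollipop states, so on the page the bra-side lollipop appears conjugated via that formula; substituting the definitions turns each diagram into an explicit sum.

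For the computational-basis family the pairing is $\bigl(\sqrt q\,\langle i|\bigr)\bigl(\sqrt q\,|j\rangle\bigr) = q\langle i\mid j\rangle = q\delta_i^j$, with no conjugation subtlety since these states are real. For the Fourier family I would either (i) expand directly: pairing the conjugated lollipop $\sum_k \omega^{-(i\mid k)}\langle k|$ with $\sum_l \omega^{(j\mid l)}|l\rangle$, the computational wire forces $k=l$ and leaves $\sum_{k\in\mathbb F_q}\omega^{(j-i\mid k)}$, which by bilinearity of $(-\mid-)$ and Lemma \ref{lemma:sum-on-circle} equals $q$ when $i=j$ and $0$ otherwise; or (ii) more slickly, observe that the one-in-one-out H-box is unitary, so $\bigl(\sqrt q\,H|i\rangle\bigr)^\dagger\bigl(\sqrt q\,H|j\rangle\bigr) = q\,\langle i|H^\dagger H|j\rangle = q\delta_i^j$. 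For the mixed term I would pair the conjugated Fourier lollipop $\sum_k \omega^{-(i\mid k)}\langle k|$ with $\sqrt q\,|j\rangle$; only the $k=j$ summand survives, giving $\sqrt q\,\omega^{-(i\mid j)} = q^{1/2}\omega^{-(i\mid j)}$ (equivalently $q\,\langle i|H^\dagger|j\rangle$ read off from the matrix entries of $H^\dagger$, using symmetry of $(-\mid-)$). The assertions that each family is orthonormal up to the scalar $\sqrt q$ and that the two families are Fourier with respect to one another are then exactly these three displayed identities, together with the observation that the cross inner products all have modulus $\sqrt q$.

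There is essentially no obstacle here; the only care required is bookkeeping. Because the Fourier lollipops are not self-conjugate, one must conjugate the correct leg so that the exponent of $\omega$ comes out with the stated sign, $-(i\mid j)$ rather than $+(i\mid j)$, and one must rewrite the exponent in the Fourier-family sum into the shape $(j-i\mid k)$ before invoking Lemma \ref{lemma:sum-on-circle}. The $\sqrt q$ normalisation factors on the lollipops should likewise be tracked so that the right-hand sides come out as $q\delta_i^j$ and $q^{1/2}\omega^{-(i\mid j)}$ and not off by a power of $q$.
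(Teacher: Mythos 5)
Your computation is correct and matches the paper's treatment: the paper offers no separate argument, simply asserting the corollary ``follows directly from the definitions'' of the lollipops, and your unfolding of the definitions (with Lemma \ref{lemma:sum-on-circle} or unitarity of the one-in-one-out H-box, plus care with the conjugation giving the sign $-(i\mid j)$) is exactly that routine verification. No gaps.
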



\subsection{Rewrite Rules}
\label{subsection:rewrite-rules}

In Backens et.al. \cite{backens_completeness_2023} a set of eight rewrite rules, each intuitively corresponding to an algebraic fact about $\mathbb Z/2\mathbb Z$, were introduced for the qubit ZH calculus. In figure \ref{fig:rewrite-rules}, similarly, we present a set of nine rules for the finite field ZH calculus, each corresponding to an algebraic fact about $\mathbb F_q$. 

\begin{figure}[h]
    \centering
    \begin{subfigure}[b]{0.35\textwidth}
        \centering
        \tikzfig{rule-zs-lhs} = \tikzfig{rule-zs-rhs} 
        \caption*{(zs)}
    \end{subfigure}
    \begin{subfigure}[b]{0.3\textwidth}
        \centering
        \tikzfig{rule-id-lhs} = \tikzfig{Wire} 
        \caption*{(id)}
    \end{subfigure}
    \begin{subfigure}[b]{0.3\textwidth}
        \centering        \tikzfig{rule-ch-lhs} = \tikzfig{rule-ch-rhs} 
        \caption*{(ch)}
    \end{subfigure}
    \begin{subfigure}[b]{0.35\textwidth}
        \centering
        \tikzfig{rule-hs-lhs} = \tikzfig{rule-hs-rhs} 
        \caption*{(hs)}
    \end{subfigure}
    \begin{subfigure}[b]{0.3\textwidth}
        \centering 
        \tikzfig{rule-spl-lhs} = \tikzfig{Wire} 
        \caption*{(spl)}
    \end{subfigure}
        \begin{subfigure}[b]{0.3\textwidth}
        \centering
        \tikzfig{copy-thru-white-lhs} = \tikzfig{copy-thru-white-rhs} 
        \caption*{(cp)}
    \end{subfigure}
    \begin{subfigure}[b]{0.35\textwidth}
        \centering
        \tikzfig{rule-ba1-lhs} = \tikzfig{rule-ba1-rhs} 
        \caption*{(ba1)}
    \end{subfigure}
    \begin{subfigure}[b]{0.35\textwidth}
        \centering
        \tikzfig{rule-ba2-lhs} = \tikzfig{rule-ba2-rhs} 
        \caption*{(ba2)}
    \end{subfigure}
    \begin{subfigure}[b]{0.25\textwidth}
        \centering
        \tikzfig{rule-pm-lhs} = \tikzfig{rule-pm-rhs}
        \caption*{(pm)}
    \end{subfigure}
    \caption{Rewrite Rules of the Finite Field ZH}
    \label{fig:rewrite-rules}
\end{figure}

The soundness of these rules is proven in appendix \ref{append:rewrite}. Note that we have not mentioned here the rule that four fourier transforms in a row makes the identity. This is because this rule is derivable from the other rules shown in figure \ref{fig:rewrite-rules}. The derivation is identical to that found in section 3 of \cite{roy_qudit_2023}. 

Another convenient derived rule to note is the fourier-copy rule: for $j\in \mathbb F_q$, 
\begin{equation}
\label{eq:fourier-copy-rule}
    \tikzfig{copy-thru-grey-lhs} = \tikzfig{copy-thru-grey-rhs} \tag{cpx}
\end{equation}
which follows straightforwardly from the copy rule (cp) and the definition of the white lollipop (\ref{eq:basis-lollipop-def}). 

\section{Universality of the Phased Finite Field ZH} 
\label{section:universality}

The generators of the phase-free finite field ZH all have matrices of the form $q^{\frac{u}{2}} M$, where $u \in \mathbb Z$, and $M$ is a matrix with entries in the ring $R = \mathbb Z\left[\omega\right]$, $\omega$ being a $p$th root of unity. The same is true for the phased finite field ZH calculus whose phase ring is $R$. By universality, then, we mean the ability to express any linear transformation $\mathbb C^{q^n} \to \mathbb C^{q^m}$ whose matrix (with respect to the computational basis $\mathbb F_q$) has entries in $R$. 

\begin{theorem}
    \label{thm:universality_phased}
    Let $q = p^t$ for some prime $p$ and $t \in \mathbb N$, and let $\omega = e^{\frac{2\pi i}{p}}$. For any linear map $M: \mathbb C^{q^n} \to \mathbb C^{q^m}$
    $$M = \sum_{\substack{j_1, ..., j_m \in \mathbb F_q \\ i_1, ..., i_n \in \mathbb F_q}} m_{i_1, ..., i_n}^{j_1, ..., j_m}| j_1 \rangle ... | j_m \rangle \langle i_1 | ... \langle i_n |$$
    where each entry $m_{i_1, ..., i_n}^{j_1, ..., j_m} \in R$, there exists a diagram in the phased finite field ZH calculus for $\mathbb F_q$ that represents $M$. 
\end{theorem}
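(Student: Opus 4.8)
The plan is to follow the template of the qubit ZH universality proof, with the finite-field twist that the role played there by the $\{0,1\}$-indicator of a bitstring is played here by Fermat's little theorem over $\mathbb F_q$. I will realize $M$ by first building a \emph{diagonal} matrix $D$ on $n+m$ wires whose diagonal lists \emph{all} the entries of $M$, and then saturating the extra inputs with a uniform superposition and deleting the extra outputs to recover $M$; the diagonal $D$ is in turn assembled from one ``bump'' diagram per matrix entry.

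First I would collect the ingredients. By Lemma~\ref{lemma:basis_states_are_in_ZH} every computational basis state is expressible (in particular every $|\vec a\rangle$ with $\vec a\in\mathbb F_q^{n+m}$), and Propositions~\ref{prop:neg}, \ref{prop:add} and \ref{prop:mult} give normalisation-free diagrams computing negation, addition and multiplication on the computational basis. Combining these with Lemma~\ref{lemma:cyclicity_of_unit_group} in the guise ``$x^{q-1}=[x\neq 0]$'', so that $[\,\vec v=\vec a\,]=\prod_l\bigl(1-(v_l-a_l)^{q-1}\bigr)$ is a genuine polynomial over $\mathbb F_q$, I would build for each $\vec a$ a diagram $\Phi_{\vec a}\colon\mathbb C^{q^{n+m}}\to\mathbb C^q$ sending $|\vec v\rangle$ to $|\,[\vec v=\vec a]\,\rangle$. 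Routing a Z-spider copy of the $n+m$ wires through $\Phi_{\vec a}$ and then into a one-input, no-output H-box labelled by the target entry $c_{\vec a}\in R$ yields the bump $G_{\vec a}=\tfrac1{\sqrt q}\sum_{\vec v}c_{\vec a}^{\,[\vec v=\vec a]}\,|\vec v\rangle\langle\vec v|$, using that such an $R$-labelled H-box reads $(1\mid-)$ as its exponent, hence sends $|0\rangle\mapsto 1$ and $|1\rangle\mapsto r$ up to $\tfrac1{\sqrt q}$, and that $r^0=1$ even for $r=0$ (which is exactly right for a zero entry).

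Next I would fuse all $q^{n+m}$ bumps along their boundaries with Z-spiders: since Z-spiders compose diagonal maps by pointwise multiplication of diagonals, and since for each fixed $\vec v$ exactly one factor $c_{\vec a}^{[\vec v=\vec a]}$ equals $c_{\vec v}$ while every other equals $1$, the result is $q^{-q^{n+m}/2}\sum_{\vec v}c_{\vec v}\,|\vec v\rangle\langle\vec v|$, which becomes $D$ after multiplying by the scalar $q^{q^{n+m}/2}$ (a tensor power of the two scalar gadgets). Finally $M=(\text{delete}^{\otimes n}\otimes\mathrm{id}^{\otimes m})\circ D\circ(\mathrm{id}^{\otimes n}\otimes(\textstyle\sum_i|i\rangle)^{\otimes m})$, where $\sum_i|i\rangle$ is the $0$-to-$1$ Z-spider and $\text{delete}$ the $1$-to-$0$ Z-spider. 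Observe that no wire of an H-box is ever bent in this construction, so the failure of flexsymmetry (Corollary~\ref{cor:dualizer}) never intervenes, and Lemma~\ref{lemma:sum-on-circle} enters only indirectly, through the arithmetic Propositions it underlies; Corollary~\ref{cor:lollipops_as_fourier_bases} is likewise only needed for bookkeeping of normalisations.

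The construction has no single deep obstacle, but the place demanding real care is the bump gadget. One must check that the diagram $\Phi_{\vec a}$ obtained by wiring together copies of the $+$, $\cdot$ and negation gadgets genuinely evaluates $\prod_l\bigl(1-(v_l-a_l)^{q-1}\bigr)$ on every basis vector (this is the essential use of Lemma~\ref{lemma:cyclicity_of_unit_group}), that its $\mathbb F_q$-valued output $0$ or $1$ is read correctly by the labelled H-box as the exponent $(1\mid-)$, and that the stray powers of $\sqrt q$ — one per bump, plus whatever the arithmetic gadgets and the Corollary~\ref{cor:lollipops_as_fourier_bases} inner products contribute — all cancel exactly against the scalar gadgets $q^{\pm s/2}$. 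The degenerate cases $n=0$ or $m=0$ and entries equal to $0$ also need checking, but they fall out of $r^0=1$ and the uniform-superposition/deletion step.
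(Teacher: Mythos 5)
Your construction is correct, and it is a close cousin of the paper's proof rather than a departure from it: both arguments run on the same engine, namely that Z-spiders implement Schur products, that the arithmetic gadgets of Propositions \ref{prop:add} and \ref{prop:mult} together with the constants of Lemma \ref{lemma:basis_states_are_in_ZH} let one evaluate polynomials over $\mathbb F_q$ on the computational basis, that Lemma \ref{lemma:cyclicity_of_unit_group} (via $x^{q-1}$) turns such a polynomial into a $\{0,1\}$-valued indicator, and that a one-legged $r$-labelled H-box then converts the indicator into the scalar $1$ or $r$. Where you differ is in the bookkeeping around this engine. The paper factors $M$ into pseudo-binary matrices, encodes each as a propositional formula over $(\mathbb F_q,+,-,\cdot,=)$, converts formulas to polynomials by induction on formula complexity, converts arbitrary polynomials to diagrams by induction on the number of variables, and finally bends wires (Choi--Jamiolkowski). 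You instead take the finest possible Schur decomposition --- one factor per matrix entry --- with the explicit Lagrange-style delta polynomial $\prod_l\bigl(1-(v_l-a_l)^{q-1}\bigr)$, which lets you skip the logic-to-polynomial step and the general polynomial induction entirely, and you replace wire-bending by the diagonal-operator trick (feed the output indices a uniform Z-superposition, delete the input-index outputs), which indeed checks out: applying your $D$ to $|\vec i\,\rangle\otimes\sum_{\vec j}|\vec j\,\rangle$ and deleting the first $n$ wires returns $\sum_{\vec j} m_{\vec i}^{\vec j}|\vec j\,\rangle$. What your route buys is self-containedness and the avoidance of any flexsymmetry worries; what it gives up is economy (always $q^{n+m}$ bump factors, versus the paper's ability to reuse Roy et al.'s formula-based machinery wholesale) and the reusable intermediate statement that \emph{every} polynomial over $\mathbb F_q$ has a phase-free evaluating diagram, which the paper's Lemma is and which is of independent interest. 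Your scalar bookkeeping is handled correctly in principle (all stray factors are powers of $q^{\pm 1/2}$, which the scalar gadgets supply), and your conventions $(1\mid 0)=0$, $(1\mid 1)=1$ and $r^0=1$ for $r=0$ match what the paper itself relies on, so I see no genuine gap.
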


\begin{proof}
    In the special case where $q = p$, this is proven in \cite{roy_qudit_2023}. The proof is constructive and, in our case, goes as follows: 

\begin{enumerate}
    \item Any matrix in $R^{m \times n}$ is equal to the Schur product of pseudo-binary matrices; that is, matrices whose entries are either $1$ or $r$ for some $r \in R$. Therefore the problem of expressing any matrix over $R$ is reducible to the problem of 
    \begin{enumerate}
        \item Expressing any pseudo-binary matrix over $R$; and 
        \item Expressing the Schur product of matrices. 
    \end{enumerate}
    \item The Z-spider suffices to implement Schur products. So the problem is reduced to that of expressing any pseudo-binary matrix. 
    \item Any pseudo-binary matrix $M: \mathbb C^{q^n} \to \mathbb C^{q^m}$ can be encoded as a propositional formula $\phi$ with $n+m$ variables over $(\mathbb F_q, +, -, \cdot, =)$ such that $\phi(i_1, ..., i_n, j_1, ..., j_m) \leftrightarrow m_{i_1, ..., i_n}^{j_1, ..., j_m} = 1$.   
    \item Any propositional formula $\phi(x_1, ..., x_n)$ corresponds to a polynomial $f_\phi \in \mathbb F_q[x_1, ..., x_n]$ such that $f_\phi(i_1, ..., i_n) = 0 \leftrightarrow \phi(i_1, ..., i_n)$. 
    \item For any polynomial $f \in \mathbb F_q[x_1, ..., x_n]$, there exists an $n$-input $1$-output phase-free ZH diagram $D_f$ that acts on the computational basis as:
    $$D_f|i_1...i_n\rangle = \begin{cases}
        |0\rangle &f(i_1, ..., i_n) = 0 \\ 
        |1\rangle &f(i_1, ..., i_n) \neq 0
    \end{cases} $$
    \item For any $r \in R$, then, we can post-select on $D_f$ with the phased H-box $\tikzfig{Hr-effect} = \sum_{j} r^{\tr(j)} \langle j |$ to obtain a (phased) ZH diagram $D$ with $n$ inputs and no outputs, such that 
    $$D|i_1 ... i_n\rangle = \begin{cases}
        1 &f(i_1, ..., i_n) = 0 \\ 
        r &f(i_1, ..., i_n) \neq 0
    \end{cases}$$
    \item Using facts 3-6 together, given any pseudo-binary matrix $M$ that is $q^n \times q^m$, there is a (phased) ZH diagram $D$ with $n+m$ inputs and no outputs such that $D|i_1, ..., i_n, j_1, ..., j_m\rangle = m_{i_1, ..., i_n}^{j_1, ..., j_m}$. The corresponding $n \to m$ diagram given by the Choi-Jamiolkowski isomorphism (bending the last $m$ wires) then implements $M$. 
\end{enumerate}

The proofs for facts 1, 2, 3, 6, and 7 are identical to that in the case of prime-dimensional qudits. Since for these facts no change is required whatsoever, we will not set out the proofs again here, and refer readers to section 4 of \cite{roy_qudit_2023}. What does change, however, are the proofs for the facts 4 and 5. We prove these below as lemmas. Taken together, this completes the proof for theorem \ref{thm:universality_phased}.
\end{proof}

\begin{lemma}
    Let $\phi(x_1, ..., x_n)$ be a propositional formula on $(\mathbb F_q, +, -, \cdot, =)$. There exists a polynomial $f_\phi \in \mathbb F_q[x_1, ..., x_n]$ such that for all $i_1, ..., i_n \in \mathbb F_q$, $f_\phi(i_1, ..., i_n) = 0 \leftrightarrow \phi(i_1, ..., i_n)$. 
\end{lemma}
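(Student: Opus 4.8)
The plan is to build $f_\phi$ by structural induction on the formula $\phi$, translating the logical connectives into polynomial operations that preserve the "zero locus encodes truth" invariant $f_\phi(\vec i) = 0 \iff \phi(\vec i)$. The base cases are the atomic formulas of the signature $(\mathbb F_q, +, -, \cdot, =)$: an atom is an equation $s(\vec x) = t(\vec x)$ between two terms built from $+, -, \cdot$ and variables and constants, so each of $s, t$ is literally a polynomial in $\mathbb F_q[x_1,\dots,x_n]$, and we set $f_{s=t} := s - t$, which vanishes exactly when the equation holds.

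For the inductive step I would handle negation, conjunction, and disjunction (these suffice, as any propositional formula can be rewritten using only $\neg, \wedge, \vee$). The key gadget is the following: for a single-variable situation, the polynomial $y \mapsto 1 - y^{q-1}$ takes the value $1$ when $y = 0$ and the value $0$ when $y \neq 0$, by Lemma~\ref{lemma:cyclicity_of_unit_group} (equivalently the generalized Fermat little theorem $y^{q-1}=1$ for $y \in \mathbb F_q^\times$). Write $e(y) := 1 - y^{q-1}$ for this "equals-zero indicator," so $e(y) \in \{0,1\}$ always, with $e(y)=1 \iff y = 0$. Then:
\begin{itemize}
\item for \emph{negation}, set $f_{\neg\phi} := e(f_\phi)$, which vanishes iff $f_\phi \neq 0$ iff $\phi$ is false iff $\neg\phi$ holds;
\item for \emph{disjunction}, set $f_{\phi \vee \psi} := f_\phi \cdot f_\psi$, which vanishes iff at least one factor does;
\item for \emph{conjunction}, either use De Morgan via the two cases above, or directly set $f_{\phi \wedge \psi} := e(f_\phi)\cdot e(f_\psi) \;-\; 1$ — wait, better: use $f_{\phi\wedge\psi} := 1 - e(f_\phi)\,e(f_\psi)$, wait I should double‑check: $e(f_\phi)e(f_\psi) = 1$ iff both formulas hold, so $1 - e(f_\phi)e(f_\psi)$ vanishes iff $\phi\wedge\psi$ holds.
\end{itemize}
(The cleanest writeup just defines $f_{\phi\wedge\psi}$ via De Morgan, $\neg(\neg\phi \vee \neg\psi)$, to avoid redundancy.) Each construction stays inside $\mathbb F_q[x_1,\dots,x_n]$, so the induction goes through and produces the desired $f_\phi$.

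The main obstacle — really the only nontrivial point — is establishing the indicator identity $e(y) = 1 - y^{q-1}$, i.e. that $y^{q-1} = 1$ for every nonzero $y \in \mathbb F_q$; everything else is bookkeeping. This is exactly Lemma~\ref{lemma:cyclicity_of_unit_group}: since $\mathbb F_q^\times$ is a cyclic (hence finite abelian) group of order $q-1$, Lagrange's theorem gives $y^{q-1}=1$ for all $y \in \mathbb F_q^\times$, and of course $0^{q-1}=0$. One should also remark that this step is precisely where the argument genuinely needs $\mathbb F_q$ to be a \emph{field} rather than merely the ring $\mathbb Z/q\mathbb Z$ — for non-prime $q$ there is no single exponent killing all units of $\mathbb Z/q\mathbb Z$ down to the idempotent pattern we need — which dovetails with the paper's broader point about why field arithmetic (and hence invertibility of multiplication, used again in step~5 of the universality proof) is the right setting. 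I would close by noting that the degree of $f_\phi$ may blow up with the size of $\phi$, but since we only care about the function $\mathbb F_q^n \to \mathbb F_q$ it computes, one may always reduce each exponent modulo the relation $x_i^q = x_i$ to keep degrees bounded by $q-1$ in each variable if desired.
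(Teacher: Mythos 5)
Your proof is correct and takes essentially the same route as the paper's: atomic formulas $s=t$ become $f = s-t$, negation uses the indicator $1-y^{q-1}$ justified by Lemma \ref{lemma:cyclicity_of_unit_group}, and disjunction becomes a product, with $\neg,\vee$ treated as an adequate set of connectives (your De Morgan handling of $\wedge$ is a harmless addition the paper omits). Only your side remark about $\mathbb Z/q\mathbb Z$ is slightly off --- its \emph{units} do satisfy a uniform exponent identity; the real obstruction there is the nonzero zero-divisors --- but this aside does not affect the argument.
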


\begin{proof}
    We prove by induction on the complexity of $\phi$. Any term with $n$ variables in $(\mathbb F_q, +, -, \cdot, =)$ is by construction a polynomial in $\mathbb F_q[x_1, ..., x_n]$. The only relation symbol is $=$. So if $\phi$ is atomic, it must be of form $\phi = (f_1(x_1, ..., x_n) = f_2(x_1, ..., x_n))$ for polynomials $f_1, f_2 \in \mathbb F_q[x_1, ..., x_n]$. Then $f_\phi = f_1 - f_2$ satisfies the requirement of the theorem. Now for the inductive step: 
    \begin{enumerate}
        \item If $\phi = \neg \phi'$ for some formula $\phi'$, then set $f_\phi = 1 - (f_{\phi'})^{q-1}$. Then, if $\phi'$ is true, we have $f_\phi = 1 - 0 = 1$, so $f_\phi \neq 0$ when $\phi$ is false. Conversely, if $\phi'$ is false, we have $f_{\phi'} \neq 0$, and so $(f_{\phi'})^{q-1} = 1$, and so $f_\phi = 0$. 
        \item If $\phi = \phi' \vee \phi''$ for formulae $\phi', \phi''$, then set $f_\phi = f_{\phi'} \cdot f_{\phi''}$. Then we have $f_\phi = 0 \leftrightarrow ((f_{\phi'} = 0) \vee (f_{\phi''} = 0)) \leftrightarrow (\phi' \vee \phi'') = \phi$. 
    \end{enumerate}
    This completes the induction. 
\end{proof}

\begin{lemma}
    For any polynomial $f \in \mathbb F_q[x_1, ..., x_n]$ there exists a phase-free ZH diagram $D_f$ that acts on the computational basis as 
    $$D_f|i_1...i_n\rangle = \begin{cases}
        |0\rangle &f(i_1, ..., i_n) = 0 \\ 
        |1\rangle &f(i_1, ..., i_n) \neq 0
    \end{cases}$$
\end{lemma}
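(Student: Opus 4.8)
The plan is to build $D_f$ compositionally, mirroring the structure of the polynomial $f$ as a term over $(\mathbb F_q, +, -, \cdot)$, and then apply a final ``indicator'' gadget that collapses the output to $|0\rangle$ or $|1\rangle$ according to whether the computed value is zero. The basic building blocks are already in hand: proposition \ref{prop:add} gives a diagram for addition, proposition \ref{prop:neg} gives negation, and proposition \ref{prop:mult} gives multiplication on the computational basis; lemma \ref{lemma:basis_states_are_in_ZH} lets us inject any constant $c \in \mathbb F_q$ appearing in $f$ as a state $|c\rangle$. Using the Z-spider to copy each input wire $|i_k\rangle$ as many times as the variable $x_k$ occurs in $f$, we can therefore assemble an $n$-input $1$-output phase-free diagram $E_f$ with $E_f|i_1\dots i_n\rangle = |f(i_1,\dots,i_n)\rangle$, by structural induction on $f$ written as a sum of monomials (or more simply as an arbitrary arithmetic term).

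Next I would construct the zero-indicator diagram. I want a $1$-input $1$-output phase-free diagram $Z$ with $Z|0\rangle = |0\rangle$ and $Z|v\rangle = |1\rangle$ for $v \neq 0$. By lemma \ref{lemma:cyclicity_of_unit_group}, $v^{q-1} = 1$ for every $v \in \mathbb F_q^\times$ and $0^{q-1}=0$, so the map $v \mapsto v^{q-1}$ is exactly this indicator. Hence $Z$ is obtained by taking $q-2$ copies of the multiplication gadget of proposition \ref{prop:mult} wired in series, fed by $q-1$ copies of the input via Z-spiders — i.e. the diagram computing $v \mapsto v^{q-1}$ on the computational basis. Composing, $D_f := Z \circ E_f$ has the required action: $D_f|i_1\dots i_n\rangle = |f(i_1,\dots,i_n)^{q-1}\rangle$, which is $|0\rangle$ when $f$ vanishes and $|1\rangle$ otherwise.

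The only subtlety — and the step I expect to need the most care — is scalar bookkeeping. Each invocation of the multiplication gadget in proposition \ref{prop:mult} carries normalization factors of $\sqrt q$ (from the H-spider definition and the $H^\dagger$), the constant-state gadgets of lemma \ref{lemma:basis_states_are_in_ZH} are defined only up to powers of $\sqrt q$, and likewise the decorated lollipop notation (\ref{eq:basis-lollipop-def}) absorbs a $\sqrt q$. Since $E_f$ and $Z$ use a number of multiplication gadgets that depends only on $f$ (its degree and number of monomials), the accumulated scalar is a fixed power $q^{k/2}$, which can be cancelled by tensoring on the appropriate number of the scalar gadgets $q^{\mp 1/2}$ introduced in section \ref{subsection:generators}. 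I would state the construction of $E_f$ and $Z$ up to global scalar, verify the computational-basis action, and then remark that the residual $q^{k/2}$ is removed by the scalar generators — exactly as in the prime-dimensional case of \cite{roy_qudit_2023}, the difference here being only that the exponent $q-1$ (rather than $p-1$) is used in the indicator, which is justified by lemma \ref{lemma:cyclicity_of_unit_group}.
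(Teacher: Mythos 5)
Your proposal is correct and follows essentially the same route as the paper: first assemble an $n$-input, $1$-output diagram computing $|f(i_1,\dots,i_n)\rangle$ from the addition, multiplication and constant gadgets (the paper organizes this as an induction on the number of variables via $\mathbb F_q[x_1,\dots,x_{n-1}][x_n]$, you as a structural induction on the term, which is an immaterial difference), and then post-compose with the $(q-1)$-power indicator map justified by Lemma \ref{lemma:cyclicity_of_unit_group}. Your extra care with the accumulated powers of $q^{\pm 1/2}$, cancelled by the scalar generators, is a harmless refinement the paper leaves implicit.
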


\begin{proof}
    The construction proceeds in two steps. First, we construct a diagram $D'_f$ with $n$ input wires and $1$ output wire such that $D'_f|i_1, ..., i_n \rangle = |f(i_1, ..., i_n) \rangle$. We then post-compose $D'_f$ with the map 
    $$\tikzfig{q-1-power} = \sum_{j \in \mathbb F_q} |j^{q-1}\rangle \langle j |$$
    which sends $|0\rangle \mapsto |0\rangle$, and all nonzero $|j\rangle \mapsto |1\rangle$. This gives us $D_f$. 

    So it remains only to construct $D'_f$, the diagram that acts on the computational basis as the polynomial itself. We do this by induction on $n$. If $n = 0$ then $f$ is just a constant $f = j \in \mathbb F_q$. Lemma \ref{lemma:basis_states_are_in_ZH} gives us a phase-free ZH diagram $\tikzfig{comp-basis-state} = q^{-\frac{1}{2}}|j\rangle$. 
    So constants are constructible in the phase-free ZH calculus. 

    Now for the inductive step: since $\mathbb F_q[x_1, ..., x_n] = \mathbb F_q[x_1, ..., x_{n-1}][x_n]$, so $f$ can be written as $f = f_0 + f_1 x_n + ... + f_k x_n^k$ for some collection $f_0, ..., f_k$ of polynomials in $\mathbb F_q[x_1, ..., x_{n-1}]$. By the inductive hypothesis there are phase-free ZH diagrams corresponding to each $f_i$. Then the following phase-free diagram constructs $f$: 
    $$\tikzfig{big-polynomial-constr}$$
    The correctness of this construction is readily verified by propositions \ref{prop:add} and \ref{prop:mult}. 
\end{proof}

\section{An Example: The Polynomial Interpolation Algorithm}
\label{section:polynomial-interpolation}
Suppose we are given oracular access to a polynomial $f \in \mathbb F_q[x]$ of known degree $d$: $f(x) = a_d x^d + ... + a_1 x + a_0$, but we do not know the coefficients $a_d, ..., a_0$. The polynomial interpolation problem asks us to infer the values of the coefficients with the fewest possible queries to the oracle. 

On classical computers it is known that $d+1$ queries are needed, and is also clearly sufficient: for that would provide us with $d+1$ linear equations between $d+1$ unknown quantities. For quantum computers, it was initially shown by Kane and Kutin \cite{kane_quantum_2010}, and independently by Meyer and Pommershein \cite{meyer_uselessness_2010}, that at least $\frac{d+1}{2}$ queries are needed to compute the coefficients with bounded error. Subsequently it was shown by Boneh and Zhandry \cite{hutchison_quantum-secure_2013} that $d$ quantum queries suffice, therefore giving a quantum-speedup of at least 1 query. Finally it was shown by Childs et. al. in 2016 \cite{childs_optimal_2016} that the lower bound $\frac{d+1}{2}$ is in fact tight: there exists a way to compute the coefficients with bounded error with only that many quantum queries. 

Here we present a diagrammatic account of Boneh and Zhandry's \cite{hutchison_quantum-secure_2013} algorithm to reduce the number of queries by $1$. Begin by noticing that a classical computer can reduce the degree-$d$ problem to a degree-$1$ problem in only $d-1$ classical queries. So it suffices to find a quantum algorithm that interpolates a degree-one polynomial in one single quantum query. 

So let $f(x) = ax + b$, with $a \in \mathbb F_q^\times$ and $b \in \mathbb F_q$. We begin by extending $f$ linearly to $F: \mathbb C^q \to \mathbb C^q$: 
\begin{equation*}
    \tikzfig{F-map} \overset{(\ref{eq:add}), (\ref{eq:mult})}{=} \tikzfig{F-constr}
\end{equation*}
and constructing the following quantum oracle: 
\begin{equation*}
    \tikzfig{oracle}
\end{equation*}
Inputting the uniform superposition to the first register, and the $0$ basis state to the second, we obtain: 
\begin{equation*}
    \psi_1 = \tikzfig{superposition-constr} = \tikzfig{superposition}
\end{equation*}

Now we define the following controlled unitary $U$: 
\begin{equation*}
    U = \,\, \tikzfig{big-U-0} + \sum_{y \neq 0} \tikzfig{big-U-1} 
\end{equation*}
note the H-spider bit of the nonzero terms is indeed unitary because $y \neq 0$. We then apply this unitary to the state $\psi_1$, intermediated by an H-gate on the bottom register: 
\begin{align*}
    \psi_2 \,\,\,\, = \tikzfig{psi2} = \,\,\,\,  \tikzfig{psi2-1-0} + \sum_{y \neq 0} \,\, \tikzfig{psi2-1}
\end{align*}

Now let's focus on the $y \neq 0$ terms in the above $\psi_2$, as that's where the bulk of the magic happens:
\begin{align*}
    \tikzfig{psi2-1} 
    \hspace{3mm} \overset{\ref{eq:basis-lollipop-def}}{=} \hspace{3mm} \tikzfig{psi2-2} \hspace{3mm}\overset{\ref{eq:fourier-copy-rule}}{=} \hspace{3mm}\tikzfig{psi2-3} \\[2mm] 
    \overset{\ref{eq:basis-lollipop-def}}{=} \hspace{3mm} \tikzfig{psi2-5} 
    \hspace{3mm} {=} \hspace{3mm} \tikzfig{psi2-6}
\end{align*}
So: 
\begin{equation*}
    \psi_2 = \tikzfig{psi2-1-0} +  \sum_{y \neq 0} \tikzfig{psi2-6}
\end{equation*}

The contents of the first term in $\psi_2$ is useless to us. So we now perform a non-demolition projective measurement on the second register. To do so, we introduce some classical registers. We now use bolded lines to denote quantum systems, viewed in the mixed-state formalism, and thin lines to denote classical systems viewed probabilistically. Let: 
\begin{equation*}
    \Psi_2 = double\left(\psi_2\right)
\end{equation*}
Where the "doubling" is the movement to the CPM construction, so that vectors become density matrices, and linear transformations become operators between density matrices, in Selinger's sense \cite{selinger_dagger_2007}. Define the projective measurement
\begin{equation*}
    P = \tikzfig{proj-0} + \sum_{y \neq 0} \tikzfig{proj-1}
\end{equation*}
Applying $P$ to the second register of $\Psi_2$ and measuring the outcome, we get 
\begin{equation*}
    \Psi_3 = \tikzfig{psi3}
\end{equation*}
We post-select on the classical output for $1$. The probability that the projective measurement returns $0$ is just the probability that the linear function $f$ returns $0$ when the input is uniformly random. But since any linear function on $\mathbb F_q$ is bijective, so the probability that $f$ returns $0$ is just $1/q$. Thus the probability that $P$ returns $0$ is also $1/q$.  
\begin{equation*}
    \Psi_3 |_{y \neq 0} = \frac{q}{q-1} \tikzfig{Psi3-1} = \frac{q}{q-1} 
 \,\,double\left(\sum_{y \neq 0} \tikzfig{Psi2-6}\right)
\end{equation*}

Now we simply hit the second register with the inverse Fourier transform and measure, obtaining  
\begin{equation*}
    \Psi_4|_{y \neq 0} = \tikzfig{psi4}
\end{equation*}
The first register is sure to return $a$; whereas the probability that the second register returns $b$ is  
\begin{align*}
    \frac{q}{q-1} \,\,\, double \left(\sum_{y \neq 0} \tikzfig{prob-b-1}\right) = \frac{q}{q-1}\left(\sum_{y \neq 0} \frac{1}{q}\omega^{\tr(by) - \tr(by)}\right)^2 = \frac{q}{q-1}\left(\frac{q-1}{q}\right)^2 = \frac{q-1}{q}
\end{align*}
where the first equality uses corollary \ref{cor:lollipops_as_fourier_bases}. Thus the second register returns $b$ with a probability of $\frac{q-1}{q}$. 

So, to sum up, the quantum polynomial interpolation algorithm is: 
\begin{algo}
    Given a classical oracle $g: \mathbb F_q \to \mathbb F_q$ a polynomial of known degree $d$: 
    \begin{enumerate}
        \item Use $d-1$ classical queries to reduce the problem of finding the coefficients of $g$ to the problem of finding the coefficients of a degree-1 polynomial $f(x) = ax + b$ with unknown $a \in \mathbb F_q^\times$ and $b \in \mathbb F_q$. 
        \item Extend $f$ linearly to a transformation $F: \mathbb C^q \to \mathbb C^q$, and build the following quantum oracle: 
        \begin{center}
            \tikzfig{oracle-doubled}
        \end{center}
        \item Make a single query to the oracle with the uniform superposition as the first input and the zeroth basis state as the second input; and then apply the following operations: 
        \begin{center}
            \tikzfig{full-algo}
        \end{center}
        where 
        \begin{equation*}
            \hat U = \,\, double \left(\,\,\,\,\tikzfig{big-U-0} + \sum_{y \neq 0} \,\,\tikzfig{big-U-1}\,\,\right)
        \end{equation*} 
        \begin{equation*}
            P = \tikzfig{proj-0} + \sum_{y \neq 0} \tikzfig{proj-1}
        \end{equation*}
        \item The right-most classical output is $0$ with probability $1/q$. In that case, abort and restart. Else, the first classical output is $a$ with probability $1$, and the second classical output is $b$ with probability $(q-1)/q$. 
    \end{enumerate}
\end{algo}

\section{Conclusion}
\label{section:conclusion}

We have presented an extension of the ZH calculus to Hilbert spaces of prime-power dimensionality $q = p^t$, such that the generalized AND gate constructed from the H-boxes corresponds to field multiplication in the finite field $\mathbb F_q$. We have shown that this finite field ZH calculus, with phases, is universal over matrices with entries in the ring $\mathbb Z[\omega]$. The usefulness of an extension of this form is demonstrated by its ability to reason elegantly about the quantum polynomial interpolation algorithm. 

An obvious direction for future research is to consider the completeness of the rewrite rules presented in section \ref{subsection:rewrite-rules}. A rule that was present in \cite{backens_completeness_2023}, corresponding to the fact that a field has no zero-divisor, is missing in our ruleset. It remains to be seen whether there is an adequate replacement rule to be found for the finite field ZH, or whether the rule is in fact redundant. 

Another direction is to consider whether the phase-\textit{free} finite field ZH calculus is also universal. To prove universality for the phase-free calculus, one would need what is known as a "successor gadget" helping one walk over the necessary phases for a single-output H-lollipop \cite{roy_qudit_2023}. The technique for constructing such a gadget, and indeed the $0$-phased H-lollipop itself, requires some further work. 

\bibliographystyle{eptcs}
\bibliography{references}

@article{roy_qudit_2023,
	title = {The {Qudit} {ZH}-{Calculus}: {Generalised} {Toffoli}+{Hadamard} and {Universality}},
	volume = {384},
	issn = {2075-2180},
	shorttitle = {The {Qudit} {ZH}-{Calculus}},
	url = {http://arxiv.org/abs/2307.10095},
	doi = {10.4204/EPTCS.384.9},
	abstract = {We introduce the qudit ZH-calculus and show how to generalise all the phase-free qubit rules to qudits. We prove that for prime dimensions d, the phase-free qudit ZH-calculus is universal for matrices over the ring Z[e{\textasciicircum}2(pi)i/d]. For qubits, there is a strong connection between phase-free ZH-diagrams and Toffoli+Hadamard circuits, a computationally universal fragment of quantum circuits. We generalise this connection to qudits, by finding that the two-qudit {\textbar}0{\textgreater}-controlled X gate can be used to construct all classical reversible qudit logic circuits in any odd qudit dimension, which for qubits requires the three-qubit Toffoli gate. We prove that our construction is asymptotically optimal up to a logarithmic term. Twenty years after the celebrated result by Shi proving universality of Toffoli+Hadamard for qubits, we prove that circuits of {\textbar}0{\textgreater}-controlled X and Hadamard gates are approximately universal for qudit quantum computing for any odd prime d, and moreover that phase-free ZH-diagrams correspond precisely to such circuits allowing post-selections.},
	urldate = {2024-02-02},
	journal = {Electronic Proceedings in Theoretical Computer Science},
	author = {Roy, Patrick and van de Wetering, John and Yeh, Lia},
	month = aug,
	year = {2023},
	note = {arXiv:2307.10095 [quant-ph]},
	keywords = {Computer Science - Logic in Computer Science, Quantum Physics},
	pages = {142--170},
	annote = {Comment: In Proceedings QPL 2023, arXiv:2308.15489},
	file = {arXiv.org Snapshot:C\:\\Users\\20180\\Zotero\\storage\\DLWFD5WC\\2307.html:text/html;Full Text PDF:C\:\\Users\\20180\\Zotero\\storage\\Q2BR4JH9\\Roy et al. - 2023 - The Qudit ZH-Calculus Generalised Toffoli+Hadamar.pdf:application/pdf},
}

@misc{de_beaudrap_simple_2023,
	title = {Simple {ZX} and {ZH} calculi for arbitrary finite dimensions, via discrete integrals},
	url = {http://arxiv.org/abs/2304.03310},
	abstract = {The ZX calculus and the ZH calculus use diagrams to denote and to compute properties of quantum operations, and other multi-linear operators described by tensor networks. These calculi involve ‘rewrite rules’, which are algebraic manipulations of the tensor networks through transformations of diagrams. The way in which diagrams denote tensor networks is through a semantic map, which assigns a meaning to each diagram in a compositional way. Slightly diﬀerent semantic maps, which may prove more convenient for one purpose or another (e.g., analysing unitary circuits versus analysing counting complexity), give rise to slightly diﬀerent rewrite systems. Through a simple application of measure theory on discrete sets, we describe a semantic map for ZX and ZH diagrams for qudits of any dimension D {\textgreater} 1, well-suited to represent unitary circuits, and admitting simple rewrite rules. In doing so, we reproduce the ‘well-tempered’ semantics of Ref. [19] for ZX and ZH diagrams in the case D = 2. We demonstrate rewrite rules for the ‘stabiliser fragment’ of the ZX calculus and a ‘multicharacter fragment’ of the ZH calculus; and demonstrate relationships which would allow the two calculi to be used interoperably as a single ‘ZXH calculus’.},
	language = {en},
	urldate = {2024-02-07},
	publisher = {arXiv},
	author = {de Beaudrap, Niel and East, Richard D. P.},
	month = apr,
	year = {2023},
	note = {arXiv:2304.03310 [quant-ph]},
	keywords = {Quantum Physics},
	annote = {Comment: 17 pages of main text, 27 pages of appendices, many equations involving diagrams},
	file = {de Beaudrap and East - 2023 - Simple ZX and ZH calculi for arbitrary finite dime.pdf:C\:\\Users\\20180\\Zotero\\storage\\MT52EATJ\\de Beaudrap and East - 2023 - Simple ZX and ZH calculi for arbitrary finite dime.pdf:application/pdf},
}

@article{backens_completeness_2023,
	title = {Completeness of the {ZH}-calculus},
	volume = {5},
	issn = {2631-4444},
	url = {http://arxiv.org/abs/2103.06610},
	doi = {10.32408/compositionality-5-5},
	abstract = {There are various gate sets used for describing quantum computation. A particularly popular one consists of Clifford gates and arbitrary single-qubit phase gates. Computations in this gate set can be elegantly described by the ZX-calculus, a graphical language for a class of string diagrams describing linear maps between qubits. The ZX-calculus has proven useful in a variety of areas of quantum information, but is less suitable for reasoning about operations outside its natural gate set such as multi-linear Boolean operations like the Toffoli gate. In this paper we study the ZH-calculus, an alternative graphical language of string diagrams that does allow straightforward encoding of Toffoli gates and other more complicated Boolean logic circuits. We find a set of simple rewrite rules for this calculus and show it is complete with respect to matrices over \${\textbackslash}mathbb\{Z\}[{\textbackslash}frac12]\$, which correspond to the approximately universal Toffoli+Hadamard gateset. Furthermore, we construct an extended version of the ZH-calculus that is complete with respect to matrices over any ring \$R\$ where \$1+1\$ is not a zero-divisor.},
	language = {en},
	urldate = {2024-02-08},
	journal = {Compositionality},
	author = {Backens, Miriam and Kissinger, Aleks and Miller-Bakewell, Hector and van de Wetering, John and Wolffs, Sal},
	month = jul,
	year = {2023},
	note = {arXiv:2103.06610 [quant-ph]},
	keywords = {Quantum Physics},
	pages = {5},
	annote = {Comment: 68 pages, many many diagrams},
	file = {Backens et al. - 2023 - Completeness of the ZH-calculus.pdf:C\:\\Users\\20180\\Zotero\\storage\\7PRFFF2C\\Backens et al. - 2023 - Completeness of the ZH-calculus.pdf:application/pdf},
}

@article{backens_zh_2019,
	title = {{ZH}: {A} {Complete} {Graphical} {Calculus} for {Quantum} {Computations} {Involving} {Classical} {Non}-linearity},
	volume = {287},
	issn = {2075-2180},
	shorttitle = {{ZH}},
	url = {http://arxiv.org/abs/1805.02175v2},
	doi = {10.4204/EPTCS.287.2},
	language = {en},
	urldate = {2024-02-15},
	journal = {Electronic Proceedings in Theoretical Computer Science},
	author = {Backens, Miriam and Kissinger, Aleks},
	month = jan,
	year = {2019},
	pages = {23--42},
	file = {Backens and Kissinger - 2019 - ZH A Complete Graphical Calculus for Quantum Comp.pdf:C\:\\Users\\20180\\Zotero\\storage\\WTMDN84Y\\Backens and Kissinger - 2019 - ZH A Complete Graphical Calculus for Quantum Comp.pdf:application/pdf},
}

@misc{van_de_wetering_zx-calculus_2020,
	title = {{ZX}-calculus for the working quantum computer scientist},
	url = {http://arxiv.org/abs/2012.13966},
	abstract = {The ZX-calculus is a graphical language for reasoning about quantum computation that has recently seen an increased usage in a variety of areas such as quantum circuit optimisation, surface codes and lattice surgery, measurement-based quantum computation, and quantum foundations. The first half of this review gives a gentle introduction to the ZX-calculus suitable for those familiar with the basics of quantum computing. The aim here is to make the reader comfortable enough with the ZX-calculus that they could use it in their daily work for small computations on quantum circuits and states. The latter sections give a condensed overview of the literature on the ZX-calculus. We discuss Clifford computation and graphically prove the Gottesman-Knill theorem, we discuss a recently introduced extension of the ZX-calculus that allows for convenient reasoning about Toffoli gates, and we discuss the recent completeness theorems for the ZX-calculus that show that, in principle, all reasoning about quantum computation can be done using ZX-diagrams. Additionally, we discuss the categorical and algebraic origins of the ZX-calculus and we discuss several extensions of the language which can represent mixed states, measurement, classical control and higher-dimensional qudits.},
	language = {en},
	urldate = {2024-02-20},
	publisher = {arXiv},
	author = {van de Wetering, John},
	month = dec,
	year = {2020},
	note = {arXiv:2012.13966 [quant-ph]},
	keywords = {Quantum Physics},
	annote = {Comment: About 75 pages of text + 8 pages of appendices},
	file = {van de Wetering - 2020 - ZX-calculus for the working quantum computer scien.pdf:C\:\\Users\\20180\\Zotero\\storage\\PZ6AS3C8\\van de Wetering - 2020 - ZX-calculus for the working quantum computer scien.pdf:application/pdf},
}

@book{lang_algebra_2002,
	address = {New York, NY},
	series = {Graduate {Texts} in {Mathematics}},
	title = {Algebra},
	volume = {211},
	isbn = {978-1-4612-6551-1 978-1-4613-0041-0},
	url = {http://link.springer.com/10.1007/978-1-4613-0041-0},
	language = {en},
	urldate = {2024-02-22},
	publisher = {Springer New York},
	author = {Lang, Serge},
	editor = {Axler, S. and Gehring, F. W. and Ribet, K. A.},
	year = {2002},
	doi = {10.1007/978-1-4613-0041-0},
	file = {Lang - 2002 - Algebra.pdf:C\:\\Users\\20180\\Zotero\\storage\\C7X9TEA2\\Lang - 2002 - Algebra.pdf:application/pdf},
}

@article{carette_recipe_2020,
	title = {A {Recipe} for {Quantum} {Graphical} {Languages}},
	copyright = {Creative Commons Attribution 3.0 Unported license, info:eu-repo/semantics/openAccess},
	issn = {1868-8969},
	url = {https://drops.dagstuhl.de/entities/document/10.4230/LIPIcs.ICALP.2020.118},
	doi = {10.4230/LIPICS.ICALP.2020.118},
	abstract = {Diﬀerent graphical calculi have been proposed to represent quantum computation. First the ZXcalculus [4], followed by the ZW-calculus [12] and then the ZH-calculus [1]. We can wonder if new ZX-like calculi will continue to be proposed forever. This article answers negatively. All those language share a common core structure we call Z∗-algebras. We classify Z∗-algebras up to isomorphism in two dimensional Hilbert spaces and show that they are all variations of the aforementioned calculi. We do the same for linear relations and show that the calculus of [2] is essentially the unique one.},
	language = {en},
	urldate = {2024-02-22},
	author = {Carette, Titouan and Jeandel, Emmanuel},
	collaborator = {Czumaj, Artur and Dawar, Anuj and Merelli, Emanuela},
	year = {2020},
	note = {Artwork Size: 17 pages, 616891 bytes
ISBN: 9783959771382
Medium: application/pdf
Publisher: Schloss Dagstuhl – Leibniz-Zentrum für Informatik},
	keywords = {Categorical Quantum Mechanics, Category Theory, Mathematics of computing, Quantum Computing, Theory of computation → Equational logic and rewriting, Theory of computation → Quantum computation theory, Theory of computation → Semantics and reasoning},
	pages = {17 pages, 616891 bytes},
	annote = {SeriesInformation
LIPIcs, Vol. 168, 47th International Colloquium on Automata, Languages, and Programming (ICALP 2020), pages 118:1-118:17},
	file = {Carette and Jeandel - 2020 - A Recipe for Quantum Graphical Languages.pdf:C\:\\Users\\20180\\Zotero\\storage\\3823EKKZ\\Carette and Jeandel - 2020 - A Recipe for Quantum Graphical Languages.pdf:application/pdf},
}

@article{childs_optimal_2016,
	title = {Optimal quantum algorithm for polynomial interpolation},
	issn = {1868-8969},
	url = {http://arxiv.org/abs/1509.09271},
	doi = {10.4230/LIPIcs.ICALP.2016.16},
	abstract = {We consider the number of quantum queries required to determine the coeﬃcients of a degree-d polynomial over Fq. A lower bound shown independently by Kane and Kutin and by Meyer and Pommersheim shows that d/2 + 1/2 quantum queries are needed to solve this problem with bounded error, whereas an algorithm of Boneh and Zhandry shows that d quantum queries are suﬃcient. We show that the lower bound is achievable: d/2 + 1/2 quantum queries suﬃce to determine the polynomial with bounded error. Furthermore, we show that d/2 + 1 queries suﬃce to achieve probability approaching 1 for large q. These upper bounds improve results of Boneh and Zhandry on the insecurity of cryptographic protocols against quantum attacks. We also show that our algorithm’s success probability as a function of the number of queries is precisely optimal. Furthermore, the algorithm can be implemented with gate complexity poly(log q) with negligible decrease in the success probability. We end with a conjecture about the quantum query complexity of multivariate polynomial interpolation.},
	language = {en},
	urldate = {2024-02-22},
	author = {Childs, Andrew M. and van Dam, Wim and Hung, Shih-Han and Shparlinski, Igor E.},
	year = {2016},
	note = {arXiv:1509.09271 [quant-ph]},
	keywords = {Quantum Physics, Computer Science - Computational Complexity, Computer Science - Cryptography and Security, Computer Science - Data Structures and Algorithms},
	pages = {13 pages, 505775 bytes},
	annote = {Comment: 17 pages, minor improvements, added conjecture about multivariate interpolation},
	file = {Childs et al. - 2016 - Optimal quantum algorithm for polynomial interpola.pdf:C\:\\Users\\20180\\Zotero\\storage\\L9YMJPPY\\Childs et al. - 2016 - Optimal quantum algorithm for polynomial interpola.pdf:application/pdf},
}

@incollection{hutchison_quantum-secure_2013,
	address = {Berlin, Heidelberg},
	title = {Quantum-{Secure} {Message} {Authentication} {Codes}},
	volume = {7881},
	isbn = {978-3-642-38347-2 978-3-642-38348-9},
	abstract = {We construct the ﬁrst Message Authentication Codes (MACs) that are existentially unforgeable against a quantum chosen message attack. These chosen message attacks model a quantum adversary’s ability to obtain the MAC on a superposition of messages of its choice. We begin by showing that a quantum secure PRF is suﬃcient for constructing a quantum secure MAC, a fact that is considerably harder to prove than its classical analogue. Next, we show that a variant of Carter-Wegman MACs can be proven to be quantum secure. Unlike the classical settings, we present an attack showing that a pair-wise independent hash family is insuﬃcient to construct a quantum secure one-time MAC, but we prove that a four-wise independent family is suﬃcient for one-time security.},
	language = {en},
	urldate = {2024-02-22},
	booktitle = {Advances in {Cryptology} – {EUROCRYPT} 2013},
	publisher = {Springer Berlin Heidelberg},
	author = {Boneh, Dan and Zhandry, Mark},
	editor = {Hutchison, David and Kanade, Takeo and Kittler, Josef and Kleinberg, Jon M. and Mattern, Friedemann and Mitchell, John C. and Naor, Moni and Nierstrasz, Oscar and Pandu Rangan, C. and Steffen, Bernhard and Sudan, Madhu and Terzopoulos, Demetri and Tygar, Doug and Vardi, Moshe Y. and Weikum, Gerhard and Johansson, Thomas and Nguyen, Phong Q.},
	year = {2013},
	note = {Series Title: Lecture Notes in Computer Science},
	pages = {592--608},
	file = {Boneh and Zhandry - 2013 - Quantum-Secure Message Authentication Codes.pdf:C\:\\Users\\20180\\Zotero\\storage\\ZIY5E7JB\\Boneh and Zhandry - 2013 - Quantum-Secure Message Authentication Codes.pdf:application/pdf},
}

@misc{kane_quantum_2010,
	title = {Quantum interpolation of polynomials},
	url = {http://arxiv.org/abs/0909.5683},
	abstract = {We consider quantum interpolation of polynomials. We imagine a quantum computer with black-box access to input/output pairs (xi, f (xi)), where f is a degree-d polynomial, and we wish to compute f (0). We give asymptotically tight quantum lower bounds for this problem, even in the case where 0 is among the possible values of xi.},
	language = {en},
	urldate = {2024-02-22},
	publisher = {arXiv},
	author = {Kane, Daniel M. and Kutin, Samuel A.},
	month = mar,
	year = {2010},
	note = {arXiv:0909.5683 [quant-ph]},
	keywords = {Quantum Physics, Computer Science - Computational Complexity},
	annote = {Comment: 8 pages; version 2 updates references},
	file = {Kane and Kutin - 2010 - Quantum interpolation of polynomials.pdf:C\:\\Users\\20180\\Zotero\\storage\\QWCSMW2E\\Kane and Kutin - 2010 - Quantum interpolation of polynomials.pdf:application/pdf},
}

@misc{meyer_uselessness_2010,
	title = {On the uselessness of quantum queries},
	url = {http://arxiv.org/abs/1004.1434},
	abstract = {Given a prior probability distribution over a set of possible oracle functions, we deﬁne a number of queries to be useless for determining some property of the function if the probability that the function has the property is unchanged after the oracle responds to the queries. A familiar example is the parity of a uniformly random Boolean-valued function over \{1, 2, . . . , N \}, for which N − 1 classical queries are useless. We prove that if 2k classical queries are useless for some oracle problem, then k quantum queries are also useless. For such problems, which include classical threshold secret sharing schemes, our result also gives a new way to obtain a lower bound on the quantum query complexity, even in cases where neither the function nor the property to be determined is Boolean.},
	language = {en},
	urldate = {2024-02-22},
	publisher = {arXiv},
	author = {Meyer, David A. and Pommersheim, James},
	month = apr,
	year = {2010},
	note = {arXiv:1004.1434 [quant-ph]},
	keywords = {Quantum Physics, Computer Science - Computational Complexity},
	file = {Meyer and Pommersheim - 2010 - On the uselessness of quantum queries.pdf:C\:\\Users\\20180\\Zotero\\storage\\M5AKCWWN\\Meyer and Pommersheim - 2010 - On the uselessness of quantum queries.pdf:application/pdf},
}

@book{coecke_picturing_2017,
	edition = {1},
	title = {Picturing {Quantum} {Processes}: {A} {First} {Course} in {Quantum} {Theory} and {Diagrammatic} {Reasoning}},
	isbn = {978-1-107-10422-8 978-1-316-21931-7},
	shorttitle = {Picturing {Quantum} {Processes}},
	url = {https://www.cambridge.org/core/product/identifier/9781316219317/type/book},
	abstract = {The unique features of the quantum world are explained in this book through the language of diagrams, setting out an innovative visual method for presenting complex theories. Requiring only basic mathematical literacy, this book employs a unique formalism that builds an intuitive understanding of quantum features while eliminating the need for complex calculations. This entirely diagrammatic presentation of quantum theory represents the culmination of ten years of research, uniting classical techniques in linear algebra and Hilbert spaces with cutting-edge developments in quantum computation and foundations. Written in an entertaining and user-friendly style and including more than one hundred exercises, this book is an ideal first course in quantum theory, foundations, and computation for students from undergraduate to PhD level, as well as an opportunity for researchers from a broad range of fields, from physics to biology, linguistics, and cognitive science, to discover a new set of tools for studying processes and interaction.},
	language = {en},
	urldate = {2024-02-22},
	publisher = {Cambridge University Press},
	author = {Coecke, Bob and Kissinger, Aleks},
	month = mar,
	year = {2017},
	doi = {10.1017/9781316219317},
	file = {Coecke and Kissinger - 2017 - Picturing Quantum Processes A First Course in Qua.pdf:C\:\\Users\\20180\\Zotero\\storage\\ZAGF783Z\\Coecke and Kissinger - 2017 - Picturing Quantum Processes A First Course in Qua.pdf:application/pdf},
}

@article{selinger_dagger_2007,
	title = {Dagger {Compact} {Closed} {Categories} and {Completely} {Positive} {Maps}},
	volume = {170},
	issn = {15710661},
	url = {https://linkinghub.elsevier.com/retrieve/pii/S1571066107000606},
	doi = {10.1016/j.entcs.2006.12.018},
	abstract = {Dagger compact closed categories were recently introduced by Abramsky and Coecke, under the name “strongly compact closed categories”, as an axiomatic framework for quantum mechanics. We present a graphical language for dagger compact closed categories, and sketch a proof of its completeness for equational reasoning. We give a general construction, the CPM construction, which associates to each dagger compact closed category its “category of completely positive maps”, and we show that the resulting category is again dagger compact closed. We apply these ideas to Abramsky and Coecke’s interpretation of quantum protocols, and to D’Hondt and Panangaden’s predicate transformer semantics.},
	language = {en},
	urldate = {2024-03-05},
	journal = {Electronic Notes in Theoretical Computer Science},
	author = {Selinger, Peter},
	month = mar,
	year = {2007},
	pages = {139--163},
	file = {Selinger - 2007 - Dagger Compact Closed Categories and Completely Po.pdf:C\:\\Users\\20180\\Zotero\\storage\\49U4LVD8\\Selinger - 2007 - Dagger Compact Closed Categories and Completely Po.pdf:application/pdf},
}

@misc{de_felice_diagrammatic_2019,
	title = {A diagrammatic calculus of fermionic quantum circuits},
	url = {http://arxiv.org/abs/1801.01231},
	doi = {10.23638/LMCS-15(3:26)2019},
	abstract = {We introduce the fermionic ZW calculus, a string-diagrammatic language for fermionic quantum computing (FQC). After defining a fermionic circuit model, we present the basic components of the calculus, together with their interpretation, and show how the main physical gates of interest in FQC can be represented in our language. We then list our axioms, and derive some additional equations. We prove that the axioms provide a complete equational axiomatisation of the monoidal category whose objects are systems of finitely many local fermionic modes (LFMs), with maps that preserve or reverse the parity of states, and the tensor product as monoidal product. We achieve this through a procedure that rewrites any diagram in a normal form. As an example, we show how the statistics of a fermionic Mach-Zehnder interferometer can be calculated in the diagrammatic language. We conclude by giving a diagrammatic treatment of the dual-rail encoding, a standard method in optical quantum computing used to perform universal quantum computation.},
	urldate = {2024-03-05},
	author = {de Felice, Giovanni and Hadzihasanovic, Amar and Ng, Kang Feng},
	month = aug,
	year = {2019},
	note = {arXiv:1801.01231 [quant-ph]},
	keywords = {Computer Science - Logic in Computer Science, Quantum Physics, 81P68, 18D10},
	file = {arXiv Fulltext PDF:C\:\\Users\\20180\\Zotero\\storage\\X6GCA7D3\\de Felice et al. - 2019 - A diagrammatic calculus of fermionic quantum circu.pdf:application/pdf;arXiv.org Snapshot:C\:\\Users\\20180\\Zotero\\storage\\UZYGW28E\\1801.html:text/html},
}

@article{coecke_interacting_2011,
	title = {Interacting {Quantum} {Observables}: {Categorical} {Algebra} and {Diagrammatics}},
	volume = {13},
	issn = {1367-2630},
	shorttitle = {Interacting {Quantum} {Observables}},
	url = {http://arxiv.org/abs/0906.4725},
	doi = {10.1088/1367-2630/13/4/043016},
	abstract = {This paper has two tightly intertwined aims: (i) To introduce an intuitive and universal graphical calculus for multi-qubit systems, the ZX-calculus, which greatly simplifies derivations in the area of quantum computation and information. (ii) To axiomatise complementarity of quantum observables within a general framework for physical theories in terms of dagger symmetric monoidal categories. We also axiomatize phase shifts within this framework. Using the well-studied canonical correspondence between graphical calculi and symmetric monoidal categories, our results provide a purely graphical formalisation of complementarity for quantum observables. Each individual observable, represented by a commutative special dagger Frobenius algebra, gives rise to an abelian group of phase shifts, which we call the phase group. We also identify a strong form of complementarity, satisfied by the Z and X spin observables, which yields a scaled variant of a bialgebra.},
	number = {4},
	urldate = {2024-03-07},
	journal = {New Journal of Physics},
	author = {Coecke, Bob and Duncan, Ross},
	month = apr,
	year = {2011},
	note = {arXiv:0906.4725 [quant-ph]},
	keywords = {Computer Science - Logic in Computer Science, Mathematics - Category Theory, Quantum Physics, Mathematics - Quantum Algebra},
	pages = {043016},
	annote = {Comment: 81 pages, many figures. Significant changes from previous version. The first sections contain a gentle introduction for physicists to the graphical language, and its use in quantum computation},
	file = {arXiv Fulltext PDF:C\:\\Users\\20180\\Zotero\\storage\\WKAQGF5H\\Coecke and Duncan - 2011 - Interacting Quantum Observables Categorical Algeb.pdf:application/pdf;arXiv.org Snapshot:C\:\\Users\\20180\\Zotero\\storage\\B5YD5MS6\\0906.html:text/html},
}

@article{cowtan_phase_2020,
	title = {Phase {Gadget} {Synthesis} for {Shallow} {Circuits}},
	volume = {318},
	issn = {2075-2180},
	url = {http://arxiv.org/abs/1906.01734},
	doi = {10.4204/EPTCS.318.13},
	abstract = {We give an overview of the circuit optimisation methods used by tket, a compiler system for quantum software developed by Cambridge Quantum Computing Ltd. We focus on a novel technique based around phase gadgets, a family of multi-qubit quantum operations which occur naturally in a wide range of quantum circuits of practical interest. The phase gadgets have a simple presentation in the ZX-calculus, which makes it easy to reason about them. Taking advantage of this, we present an efficient method to translate the phase gadgets back to CNOT gates and single qubit operations suitable for execution on a quantum computer with significant reductions in gate count and circuit depth. We demonstrate the effectiveness of these methods on a quantum chemistry benchmarking set based on variational circuits for ground state estimation of small molecules.},
	urldate = {2024-03-07},
	journal = {Electronic Proceedings in Theoretical Computer Science},
	author = {Cowtan, Alexander and Dilkes, Silas and Duncan, Ross and Simmons, Will and Sivarajah, Seyon},
	month = may,
	year = {2020},
	note = {arXiv:1906.01734 [quant-ph]},
	keywords = {Computer Science - Logic in Computer Science, Quantum Physics},
	pages = {213--228},
	annote = {Comment: In Proceedings QPL 2019, arXiv:2004.14750},
	file = {arXiv Fulltext PDF:C\:\\Users\\20180\\Zotero\\storage\\NYU4A5NV\\Cowtan et al. - 2020 - Phase Gadget Synthesis for Shallow Circuits.pdf:application/pdf;arXiv.org Snapshot:C\:\\Users\\20180\\Zotero\\storage\\GENKG5FW\\1906.html:text/html},
}

@misc{hadzihasanovic_algebra_2017,
	title = {The algebra of entanglement and the geometry of composition},
	url = {http://arxiv.org/abs/1709.08086},
	doi = {10.48550/arXiv.1709.08086},
	abstract = {String diagrams turn algebraic equations into topological moves that have recurring shapes, involving the sliding of one diagram past another. We individuate, at the root of this fact, the dual nature of polygraphs as presentations of higher algebraic theories, and as combinatorial descriptions of "directed spaces". Operations of polygraphs modelled on operations of topological spaces are used as the foundation of a compositional universal algebra, where sliding moves arise from tensor products of polygraphs. We reconstruct several higher algebraic theories in this framework. In this regard, the standard formalism of polygraphs has some technical problems. We propose a notion of regular polygraph, barring cell boundaries that are not homeomorphic to a disk of the appropriate dimension. We define a category of non-degenerate shapes, and show how to calculate their tensor products. Then, we introduce a notion of weak unit to recover weakly degenerate boundaries in low dimensions, and prove that the existence of weak units is equivalent to a representability property. We then turn to applications of diagrammatic algebra to quantum theory. We re-evaluate the category of Hilbert spaces from the perspective of categorical universal algebra, which leads to a bicategorical refinement. Then, we focus on the axiomatics of fragments of quantum theory, and present the ZW calculus, the first complete diagrammatic axiomatisation of the theory of qubits. The ZW calculus has several advantages over ZX calculi, including a computationally meaningful normal form, and a fragment whose diagrams can be read as setups of fermionic oscillators. Moreover, its generators reflect an operational classification of entangled states of 3 qubits. We conclude with generalisations of the ZW calculus to higher-dimensional systems, including the definition of a universal set of generators in each dimension.},
	urldate = {2024-03-07},
	publisher = {arXiv},
	author = {Hadzihasanovic, Amar},
	month = sep,
	year = {2017},
	note = {arXiv:1709.08086 [quant-ph]},
	keywords = {Computer Science - Logic in Computer Science, Mathematics - Category Theory, Quantum Physics, 18C10, 18D05 (Primary), 81P10, 06A07 (Secondary)},
	annote = {Comment: v2: changes to end of Chapter 3. v1: 214 pages, many figures; University of Oxford doctoral thesis},
	file = {arXiv Fulltext PDF:C\:\\Users\\20180\\Zotero\\storage\\6CUMCIN9\\Hadzihasanovic - 2017 - The algebra of entanglement and the geometry of co.pdf:application/pdf;arXiv.org Snapshot:C\:\\Users\\20180\\Zotero\\storage\\JJ7XEAYT\\1709.html:text/html},
}

@misc{wang_non-anyonic_2021,
	title = {A non-anyonic qudit {ZW}-calculus},
	url = {http://arxiv.org/abs/2109.11285},
	doi = {10.48550/arXiv.2109.11285},
	abstract = {ZW-calculus is a useful graphical language for pure qubit quantum computing. It is via the translation of the completeness of ZW-calculus that the first proof of completeness of ZX-calculus was obtained. A d-level generalisation of qubit ZW-calculus (anyonic qudit ZW-calculus) has been given in [Hadzihasanovic 2017] which is universal for pure qudit quantum computing. However, the interpretation of the W spider in this type of ZW-calculus has so-called q-binomial coefficients involved, thus makes computation quite complicated. In this paper, we give a new type of qudit ZW-calculus which has generators and rewriting rules similar to that of the qubit ZW-calculus. Especially, the Z spider is exactly the same as that of the qudit ZX-calculus as given in [Wang 2021], and the new W spider has much simpler interpretation as a linear map. Furthermore, we establish a translation between this qudit ZW-calculus and the qudit ZX-calculus which is universal as shown in [Wang 2021], therefore this qudit ZW-calculus is also universal for pure qudit quantum computing.},
	urldate = {2024-03-07},
	publisher = {arXiv},
	author = {Wang, Quanlong},
	month = oct,
	year = {2021},
	note = {arXiv:2109.11285 [quant-ph]},
	keywords = {Quantum Physics},
	annote = {Comment: 23 pages, corrected some typos},
	file = {arXiv Fulltext PDF:C\:\\Users\\20180\\Zotero\\storage\\5HD8SH7A\\Wang - 2021 - A non-anyonic qudit ZW-calculus.pdf:application/pdf;arXiv.org Snapshot:C\:\\Users\\20180\\Zotero\\storage\\5834A4R8\\2109.html:text/html},
}

@misc{koch_quantum_2022,
	title = {Quantum {Machine} {Learning} using the {ZXW}-{Calculus}},
	url = {http://arxiv.org/abs/2210.11523},
	doi = {10.48550/arXiv.2210.11523},
	abstract = {The field of quantum machine learning (QML) explores how quantum computers can be used to more efficiently solve machine learning problems. As an application of hybrid quantum-classical algorithms, it promises a potential quantum advantages in the near term. In this thesis, we use the ZXW-calculus to diagrammatically analyse two key problems that QML applications face. First, we discuss algorithms to compute gradients on quantum hardware that are needed to perform gradient-based optimisation for QML. Concretely, we give new diagrammatic proofs of the common 2- and 4-term parameter shift rules used in the literature. Additionally, we derive a novel, generalised parameter shift rule with 2n terms that is applicable to gates that can be represented with n parametrised spiders in the ZXW-calculus. Furthermore, to the best of our knowledge, we give the first proof of a conjecture by Anselmetti et al. by proving a no-go theorem ruling out more efficient alternatives to the 4-term shift rule. Secondly, we analyse the gradient landscape of quantum ans{\textbackslash}"atze for barren plateaus using both empirical and analytical techniques. Concretely, we develop a tool that automatically calculates the variance of gradients and use it to detect likely barren plateaus in commonly used quantum ans{\textbackslash}"atze. Furthermore, we formally prove the existence or absence of barren plateaus for a selection of ans{\textbackslash}"atze using diagrammatic techniques from the ZXW-calculus.},
	urldate = {2024-03-07},
	publisher = {arXiv},
	author = {Koch, Mark},
	month = oct,
	year = {2022},
	note = {arXiv:2210.11523 [quant-ph]},
	keywords = {Computer Science - Logic in Computer Science, Quantum Physics},
	annote = {Comment: Master's thesis},
	file = {arXiv Fulltext PDF:C\:\\Users\\20180\\Zotero\\storage\\DMAVRU2U\\Koch - 2022 - Quantum Machine Learning using the ZXW-Calculus.pdf:application/pdf;arXiv.org Snapshot:C\:\\Users\\20180\\Zotero\\storage\\J8EKKDRT\\2210.html:text/html},
}

@article{shaikh_how_2023,
	title = {How to {Sum} and {Exponentiate} {Hamiltonians} in {ZXW} {Calculus}},
	volume = {394},
	issn = {2075-2180},
	url = {http://arxiv.org/abs/2212.04462},
	doi = {10.4204/EPTCS.394.14},
	abstract = {This paper develops practical summation techniques in ZXW calculus to reason about quantum dynamics, such as unitary time evolution. First we give a direct representation of a wide class of sums of linear operators, including arbitrary qubits Hamiltonians, in ZXW calculus. As an application, we demonstrate the linearity of the Schroedinger equation and give a diagrammatic representation of the Hamiltonian in Greene-Diniz et al, which is the first paper that models carbon capture using quantum computing. We then use the Cayley-Hamilton theorem to show in principle how to exponentiate arbitrary qubits Hamiltonians in ZXW calculus. Finally, we develop practical techniques and show how to do Taylor expansion and Trotterization diagrammatically for Hamiltonian simulation. This sets up the framework for using ZXW calculus to the problems in quantum chemistry and condensed matter physics.},
	urldate = {2024-03-07},
	journal = {Electronic Proceedings in Theoretical Computer Science},
	author = {Shaikh, Razin A. and Wang, Quanlong and Yeung, Richie},
	month = nov,
	year = {2023},
	note = {arXiv:2212.04462 [quant-ph]},
	keywords = {Quantum Physics},
	pages = {236--261},
	annote = {Comment: In Proceedings QPL 2022, arXiv:2311.08375},
	file = {arXiv Fulltext PDF:C\:\\Users\\20180\\Zotero\\storage\\P3UMZZ6J\\Shaikh et al. - 2023 - How to Sum and Exponentiate Hamiltonians in ZXW Ca.pdf:application/pdf;arXiv.org Snapshot:C\:\\Users\\20180\\Zotero\\storage\\G9PE5QNF\\2212.html:text/html},
}

@misc{wang_differentiating_2022,
	title = {Differentiating and {Integrating} {ZX} {Diagrams} with {Applications} to {Quantum} {Machine} {Learning}},
	url = {http://arxiv.org/abs/2201.13250},
	doi = {10.48550/arXiv.2201.13250},
	abstract = {ZX-calculus has proved to be a useful tool for quantum technology with a wide range of successful applications. Most of these applications are of an algebraic nature. However, other tasks that involve differentiation and integration remain unreachable with current ZX techniques. Here we elevate ZX to an analytical perspective by realising differentiation and integration entirely within the framework of ZX-calculus. We explicitly illustrate the new analytic framework of ZX-calculus by applying it in context of quantum machine learning for the analysis of barren plateaus.},
	urldate = {2024-03-07},
	publisher = {arXiv},
	author = {Wang, Quanlong and Yeung, Richie and Koch, Mark},
	month = nov,
	year = {2022},
	note = {arXiv:2201.13250 [quant-ph]},
	keywords = {Quantum Physics, Computer Science - Machine Learning},
	annote = {Comment: 37 pages, 3 authors now, more results},
	file = {arXiv Fulltext PDF:C\:\\Users\\20180\\Zotero\\storage\\QVPF9DMZ\\Wang et al. - 2022 - Differentiating and Integrating ZX Diagrams with A.pdf:application/pdf;arXiv.org Snapshot:C\:\\Users\\20180\\Zotero\\storage\\6ANY6WRB\\2201.html:text/html},
}

@article{cohen_primitive_1999,
	title = {Primitive {Normal} {Bases} with {Prescribed} {Trace}},
	volume = {9},
	issn = {1432-0622},
	url = {https://doi.org/10.1007/s002000050112},
	doi = {10.1007/s002000050112},
	abstract = {Let E be a finite degree extension over a finite field F = GF(q), G the Galois group of E over F and let a∈F be nonzero. We prove the existence of an element w in E satisfying the following conditions:},
	language = {en},
	number = {5},
	urldate = {2024-11-01},
	journal = {Applicable Algebra in Engineering, Communication and Computing},
	author = {Cohen, S. D. and Hachenberger, D.},
	month = may,
	year = {1999},
	keywords = {Artificial Intelligence, Key words: Finite field, Primitive element, Normal basis, Free element, Trace, Character sum},
	pages = {383--403},
	file = {Full Text PDF:C\:\\Users\\20180\\Zotero\\storage\\C6AB5P3B\\Cohen and Hachenberger - 1999 - Primitive Normal Bases with Prescribed Trace.pdf:application/pdf},
}

@misc{gottesman_surviving_2024,
	title = {Surviving as a {Quantum} {Computer} in a {Classical} {World}},
	url = {https://www.cs.umd.edu/class/spring2024/cmsc858G/QECCbook-2024-ch1-15.pdf},
	urldate = {2024-10-18},
	author = {Gottesman, Daniel},
	month = feb,
	year = {2024},
	file = {QECCbook-2024-ch1-15.pdf:C\:\\Users\\20180\\Zotero\\storage\\6QSGITCR\\QECCbook-2024-ch1-15.pdf:application/pdf},
}

\appendix
\section{Soundness of Finite Field Arithmetics in ZH}
\label{append:arithmetics}

\begin{claim}
    (Proposition \ref{prop:neg}) The 1-to-1 X-spider is the same as applying the H-box twice, and acts on the computational basis as negation: 
    \begin{equation}
        \tikzfig{x-neg} = \tikzfig{H2-neg} = \sum_{i \in \mathbb F_q} |-i\rangle \langle i|  \tag{$\neg$}
    \end{equation}
\end{claim}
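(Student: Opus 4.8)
The plan is to verify the claimed matrix identity by direct computation on the computational basis, using the definitions of the X-spider and the H-box together with Lemma~\ref{lemma:sum-on-circle}. The X-spider was defined as $\tikzfig{X-spider} = \tikzfig{X-spider-constr}$, i.e.\ as an H-box conjugated by further H-boxes on each leg; in the $1$-to-$1$ case this construction collapses (the outer H-boxes on a single wire compose with the middle one), so the first task is to unwind that definition and observe that the $1$-to-$1$ X-spider is literally $H \circ H$, the H-box applied twice. This gives the first equality $\tikzfig{x-neg} = \tikzfig{H2-neg}$ essentially for free from the definitions.

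Next I would compute the matrix of $H \circ H$ on basis states. Writing $H = \frac{1}{\sqrt q}\sum_{j,i}\omega^{(j\mid i)}\ketbra{j}{i}$ (the $1$-to-$1$ case), we get
\begin{equation*}
H^2 = \frac{1}{q}\sum_{i,k}\left(\sum_{j\in\mathbb F_q}\omega^{(j\mid i)+(j\mid k)}\right)\ketbra{k}{i} = \frac{1}{q}\sum_{i,k}\left(\sum_{j\in\mathbb F_q}\omega^{(j\mid i+k)}\right)\ketbra{k}{i},
\end{equation*}
where the step uses bilinearity of $(-\mid-)$ to combine the exponents. By Lemma~\ref{lemma:sum-on-circle} the inner sum is $q$ when $i+k=0$ and $0$ otherwise, so $H^2 = \sum_{i\in\mathbb F_q}\ketbra{-i}{i}$, which is exactly the claimed negation map. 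This simultaneously establishes the second equality.

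The only genuine subtlety — and the step I expect to be the main obstacle, though it is minor — is being careful with the definition of the X-spider as given by $\tikzfig{X-spider-constr}$: one must check that the outer Fourier-conjugation really does reduce, in the $1\to1$ arity, to a single additional H-box rather than to $H^\dagger$ or to something involving the dualizer of Corollary~\ref{cor:dualizer}. Because the underlying bilinear form $(-\mid-)$ is symmetric (the H-box is self-transpose, by the Proposition following the definition of the generators), transpose issues do not bite here, and the conjugation genuinely produces $H^2$. Once that bookkeeping is settled, the computation above closes the proof; alternatively one may cite the analogous derivation in~\cite{roy_qudit_2023}, since nothing in this particular identity depends on whether $q$ is prime.
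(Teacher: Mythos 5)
Your proposal is correct and takes essentially the same route as the paper's own proof: unfold the $1$-to-$1$ X-spider into two plain H-boxes (the paper does this via the (id) rule collapsing the intermediate Z-spider), then combine exponents by bilinearity and apply Lemma~\ref{lemma:sum-on-circle} to obtain $\sum_{i\in\mathbb F_q}\ketbra{-i}{i}$. The bookkeeping subtlety you flag about $H$ versus $H^\dagger$ is resolved exactly as you suggest, by the symmetry of $(-\mid-)$.
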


\begin{proof}
    \begin{align*}
        \tikzfig{x-neg} &= \tikzfig{x-neg-constr} \\ 
        &\overset{(id)}{=} \tikzfig{H2-neg} \\ 
        &= \frac{1}{q}\sum_{i, j, k \in \mathbb F_q} \omega^{\tr(ij)} \omega^{\tr(jk)} |k\rangle \langle i | \\ 
        &= \frac{1}{q}\sum_{i, j, k \in \mathbb F_q} \omega^{\tr(j(i+k))} |k\rangle \langle i | \\ 
        &\overset{\ref{lemma:sum-on-circle}}{=} \sum_{i \in \mathbb F_q}|-i\rangle \langle i |
    \end{align*}
\end{proof}

\begin{claim}
    (Proposition \ref{prop:add}) The 2-to-1 X-spider implements addition in $\mathbb F_q$, and the 0-to-1 X-spider is the additive identity $0 \in \mathbb F_q$: 
    \begin{equation}
         \tikzfig{Addition} = |i+j\rangle \tag{$+$}
    \end{equation}
    \begin{equation}
        \tikzfig{x-state} = |0\rangle \tag{$0$}
    \end{equation}
\end{claim}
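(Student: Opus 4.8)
The plan is to run the same kind of direct path-sum computation used for Proposition~\ref{prop:neg}, now with one extra Z-spider in the middle, and to let Lemma~\ref{lemma:sum-on-circle} do all the collapsing.

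First I would unfold $\tikzfig{Addition}$ using the definition $\tikzfig{X-spider} = \tikzfig{X-spider-constr}$, rewriting the 2-to-1 X-spider as a 2-to-1 Z-spider carrying a one-in-one-out H-box on each of its three legs, and the post-composed 1-to-1 X-spider as a pair of H-boxes on the output wire, while keeping track of the scalar gadget that the construction bundles in. Writing the Z-spider as $\sum_k |k\rangle\langle k|\langle k|$ and each H-box as $\tfrac{1}{\sqrt q}\sum_{a,b}\omega^{(a|b)}|a\rangle\langle b|$ and composing, the matrix entry of $\tikzfig{Addition}$ from $|i\rangle|j\rangle$ to $|c\rangle$ becomes (up to the explicit scalar) a multiple of $\sum_k \omega^{(k|i)+(k|j)+(c|k)}$, once the two Kronecker deltas produced by the Z-spider have eliminated the sums over the two input-side internal wires.

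Next I would use symmetry and bilinearity of $(-|-)$ to rewrite the exponent as $(k|i+j+c)$, so the remaining internal sum is exactly $\sum_k \omega^{(k|i+j+c)}$; by Lemma~\ref{lemma:sum-on-circle} this equals $q$ when $i+j+c=0$ and $0$ otherwise, pinning $c=-(i+j)$. The two further H-boxes coming from the 1-to-1 X-spider then flip the sign — precisely the mechanism of Proposition~\ref{prop:neg} — giving $c=i+j$, that is $\tikzfig{Addition} = \sum_{i,j\in\mathbb F_q}|i+j\rangle\langle i|\langle j|$. For the second identity, the $0$-to-$1$ X-spider is this same computation with no input legs: the $0$-to-$1$ Z-spider $\sum_i |i\rangle$ followed by one H-box gives $\tfrac{1}{\sqrt q}\sum_{i,a}\omega^{(a|i)}|a\rangle$, and $\sum_i \omega^{(a|i)} = q$ when $a=0$ and $0$ otherwise, again by Lemma~\ref{lemma:sum-on-circle}, so it collapses to $|0\rangle$.

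I do not expect a genuine obstacle: the argument is a routine computation, structurally identical to the proof of Proposition~\ref{prop:neg} already in the appendix, with Lemma~\ref{lemma:sum-on-circle} supplying the single collapse. The one point that needs care is the scalar bookkeeping — each H-box contributes $q^{-1/2}$, each application of Lemma~\ref{lemma:sum-on-circle} contributes $q$, and the scalar gadget in $\tikzfig{X-spider-constr}$ must be the one that makes these multiply to $1$ — together with remembering that the residual negation from the final 1-to-1 X-spider is what makes the output $i+j$ rather than $-(i+j)$.
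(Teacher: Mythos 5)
Your proposal is correct and follows essentially the same route as the paper's appendix proof: unfold the X-spiders into Z-spiders dressed with H-boxes, compute the path sum, and collapse the internal sum with Lemma~\ref{lemma:sum-on-circle} (the paper merely merges the output-leg H-boxes into a single $\omega^{-(l|k)}$ factor and pins $l=i+j$ in one step, rather than first obtaining $-(i+j)$ and then negating via Proposition~\ref{prop:neg}). Just carry out the scalar bookkeeping you flag explicitly --- e.g.\ your intermediate $\tfrac{1}{\sqrt q}\sum_{i,a}\omega^{(a|i)}|a\rangle$ for the $0$-to-$1$ case equals $\sqrt q\,|0\rangle$ until the $q^{-1/2}$ gadget in the X-spider construction is included, exactly as in the paper's $\tfrac1q\sum_{k,l}\omega^{(l|k)}|l\rangle$.
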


\begin{proof}
    \begin{align*}
        \tikzfig{Addition} &= \tikzfig{Addition-constr} \\ 
        &= \frac{1}{q}\sum_{k, l \in \mathbb F_q} \omega^{-\tr(lk)} \omega^{\tr(ki)} \omega^{\tr(kj)} |l\rangle \\ 
        &= \frac{1}{q}\sum_{k, l\in \mathbb F_q} \omega^{\tr(k(i + j - l))} |l\rangle \\ 
        &\overset{\ref{lemma:sum-on-circle}}{=} |i+j\rangle \\
        \tikzfig{x-state} &= \tikzfig{x-state-constr} \\ 
        &= \frac{1}{q} \sum_{k, l} \omega^{\tr(lk)} |l\rangle \\ 
        &\overset{\ref{lemma:sum-on-circle}}{=} |0\rangle. 
    \end{align*}
\end{proof}

\begin{claim}
    (Proposition \ref{prop:mult}) The 2-to-1 H-spider followed by the inverse Fourier transform, implements multiplication in $\mathbb F_q$. and the 0-to-1 H-spider followed by the inverse Fourier transform is the multiplicative identity $1 \in \mathbb F_q$: 
    \begin{equation}
        \tikzfig{Multiplication} = |i\cdot j\rangle \tag{$\cdot$}
    \end{equation}
    \begin{equation}
        \tikzfig{H-state} = |1\rangle \tag{$1$}
    \end{equation}
\end{claim}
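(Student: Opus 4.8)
The plan is to mirror the computations already carried out above for the $(\neg)$ and $(+)$ claims: expand each diagram into its defining sum indexed by $\mathbb{F}_q$, collect the exponents of $\omega$ using symmetry and $\mathbb{F}_p$-bilinearity of $(-\mid-)$, and collapse the internal summation with Lemma \ref{lemma:sum-on-circle}.

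First I would read off the 2-to-1 H-spider from the defining formula at $m = 1$, $n = 2$ as $\frac{1}{\sqrt{q}}\sum_{i,j,k \in \mathbb{F}_q}\omega^{(k \mid ij)}\,|k\rangle\langle i|\langle j|$, and the inverse Fourier transform $H^\dagger$ as $\frac{1}{\sqrt{q}}\sum_{k,l}\omega^{-(l \mid k)}\,|l\rangle\langle k|$. Composing the two along the shared wire $k$ gives $\frac{1}{q}\sum_{i,j,k,l}\omega^{(k \mid ij)-(l \mid k)}\,|l\rangle\langle i|\langle j|$. Since $(-\mid-)$ is symmetric and bilinear, $(k \mid ij) - (l \mid k) = (k \mid ij - l)$, so carrying out the sum over $k$ first and applying Lemma \ref{lemma:sum-on-circle} produces $q$ when $l = ij$ and $0$ otherwise; the factor $q$ cancels the $1/q$, leaving $\sum_{i,j}|ij\rangle\langle i|\langle j|$, which is the stated equation $(\cdot)$.

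For the second equation, the 0-to-1 H-spider is the $n = 0$ instance of the defining formula, in which the input product $i_1\cdots i_n$ is the empty product, namely the multiplicative unit $1 \in \mathbb{F}_q$; hence it equals $\frac{1}{\sqrt{q}}\sum_k\omega^{(k \mid 1)}|k\rangle$. Post-composing with $H^\dagger$ and invoking Lemma \ref{lemma:sum-on-circle} once more collapses the sum to $|1\rangle$.

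The only step requiring any care — the nearest thing to an obstacle — is justifying that the empty input product inside the H-box formula is the multiplicative unit $1$ rather than some other element. This is the multiplicative analogue of the convention, already used for the $0$-to-$1$ X-spider in the $(0)$ claim, that an empty sum is $0$, and it is forced by simply reading the H-box definition at $n = 0$. Everything else is routine bookkeeping with the bilinear form, structurally identical to the additive cases proven just above.
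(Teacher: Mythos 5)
Your proposal is correct and follows essentially the same route as the paper's own proof: expand the composite $H^\dagger \circ H$ into its defining sums, merge the exponents using symmetry and bilinearity of $(-\mid-)$, and collapse the internal sum over the middle wire with Lemma \ref{lemma:sum-on-circle}, with the $0$-to-$1$ case handled by reading the H-box definition at $n=0$ (empty product $=1$) exactly as the paper does. The only cosmetic difference is that you keep the sum over the inputs $i,j$ to state the result in operator form, whereas the appendix fixes $i,j$ as basis-state inputs.
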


\begin{proof}
    \begin{align*}
        \tikzfig{Multiplication} &= \frac{1}{q}\sum_{k, l \in \mathbb F_q} \omega^{-\tr(l k)} \omega^{\tr(k ij)} |l\rangle \\ 
        &= \frac{1}{q} \sum_{k, l \in \mathbb F_q} \omega^{\tr((l - ij) k)} |l\rangle \\ 
        &\overset{\ref{lemma:sum-on-circle}}{=} \frac{1}{q} \cdot q |i \cdot j\rangle \\ 
        &= |i\cdot j\rangle \\ 
        \tikzfig{H-state} &= \frac{1}{q} \sum_{k, l \in \mathbb F_q} \omega^{-\tr(l k)} \omega^{\tr(k1)} |l\rangle \\ 
        &= \frac{1}{q} \sum_{k, l \in \mathbb F_q} \omega^{\tr(k (1 - l))} |l\rangle \\ 
        &\overset{\ref{lemma:sum-on-circle}}{=} |1\rangle. 
    \end{align*}
\end{proof}





\pagebreak
\section{Soundness of Rewrite Rules} 
\label{append:rewrite}

\begin{claim}
    The following equations hold when the ZH calculus is interpreted as linear maps: 
    \begin{figure*}[h]
    \centering
    \begin{subfigure}[b]{0.4\textwidth}
        \centering
        \tikzfig{rule-zs-lhs} = \tikzfig{rule-zs-rhs} 
        \caption*{(zs)}
    \end{subfigure}
    \begin{subfigure}[b]{0.4\textwidth}
        \centering
        \tikzfig{rule-hs-lhs} = \tikzfig{rule-hs-rhs} 
        \caption*{(hs)}
    \end{subfigure}
    \begin{subfigure}[b]{0.4\textwidth}
        \centering
        \tikzfig{rule-id-lhs} = \tikzfig{Wire} 
        \caption*{(id)}
    \end{subfigure}
    \begin{subfigure}[b]{0.4\textwidth}
        \centering        \tikzfig{rule-ch-lhs} = \tikzfig{rule-ch-rhs} 
        \caption*{(ch)}
    \end{subfigure}
    \begin{subfigure}[b]{0.4\textwidth}
        \centering 
        \tikzfig{rule-spl-lhs} = \tikzfig{Wire} 
        \caption*{(spl)}
    \end{subfigure}
        \begin{subfigure}[b]{0.4\textwidth}
        \centering
        \tikzfig{copy-thru-white-lhs} = \tikzfig{copy-thru-white-rhs} 
        \caption*{(cp)}
    \end{subfigure}
    \begin{subfigure}[b]{0.4\textwidth}
        \centering
        \tikzfig{rule-ba1-lhs} = \tikzfig{rule-ba1-rhs} 
        \caption*{(ba1)}
    \end{subfigure}
    \begin{subfigure}[b]{0.4\textwidth}
        \centering
        \tikzfig{rule-ba2-lhs} = \tikzfig{rule-ba2-rhs} 
        \caption*{(ba2)}
    \end{subfigure}
    \begin{subfigure}[b]{0.4\textwidth}
        \centering
        \tikzfig{rule-pm-lhs} = \tikzfig{rule-pm-rhs}
        \caption*{(pm)}
    \end{subfigure}
\end{figure*}
\end{claim}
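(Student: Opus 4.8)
The statement to prove is the soundness of the nine rewrite rules in Figure~\ref{fig:rewrite-rules}, i.e.\ that each equation of diagrams holds when both sides are interpreted as linear maps $\mathbb C^{q^n} \to \mathbb C^{q^m}$. My plan is to verify each rule separately by unfolding the generators into their defining sums over $\mathbb F_q$ and comparing matrix entries, leaning heavily on Lemma~\ref{lemma:sum-on-circle} (the ``Gauss-sum'' identity $\sum_{k}\omega^{(j\mid k)} = q\,[j=0]$) as the workhorse, exactly as in the arithmetic soundness claims already proven in Appendix~\ref{append:arithmetics}.

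The natural order is to group the rules by which generator dominates. First I would dispatch the Z-spider rules (zs) and (id): these are the standard ZX spider-fusion and Fourier-identity facts, and since the Z-spider here is literally the ZX Z-spider, (zs) is spider fusion and follows directly from the defining sum $\sum_i |i\rangle^{\otimes m}\langle i|^{\otimes n}$; (id) is the statement that two one-in-one-out H-boxes compose to the identity, which is the unitarity already noted after the ``inverse Fourier transform'' proposition, re-proved by the same $\frac1q\sum_{i,j,k}\omega^{(i\mid j)}\omega^{-(j\mid k)}|k\rangle\langle i|$ computation collapsed via Lemma~\ref{lemma:sum-on-circle}. Next the H-box rules (hs) and (ch): (hs) is H-box fusion, proven by writing the two H-boxes as sums and using that $(ab\mid c)$-type exponents combine correctly because $(-\mid-)$ is bilinear — the key algebraic input is the multilinearity built into the exponent $\omega^{(j_1\cdots j_m\mid i_1\cdots i_n)}$; (ch) should be the interaction between the H-box and a Z-spider (the analogue of the ``color change''/copy behaviour), handled again by entrywise comparison. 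Then the structural rules (spl) and (cp): (spl) is essentially that an H-box connected appropriately to an X-spider simplifies to a wire — here one uses Proposition~\ref{prop:add} / Proposition~\ref{prop:mult} and Lemma~\ref{lemma:sum-on-circle}; (cp) is the copy rule for the white (computational-basis) lollipop through a Z-spider, which reduces to the fact that $|j\rangle$ is an eigenstate / is copied by the Z-spider comultiplication. Finally the bialgebra rules (ba1), (ba2) and the Pauli-mult rule (pm): (ba1)/(ba2) express the bialgebra law between the additive (X) and copy (Z) structures — these are proven by expanding both sides to sums indexed by the internal wires and matching, the relevant field fact being distributivity/associativity of $+$ in $\mathbb F_q$; (pm) should encode a multiplicative fact (e.g.\ that the $\kappa$-lollipop or a phased H-box interacts with multiplication as expected), proven by one more Lemma~\ref{lemma:sum-on-circle} collapse after expansion.

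I expect the main obstacle to be rule (pm) together with whichever of (ba1)/(ba2)/(ch) involves the transpose-multiplication subtlety: because the H-box in non-prime dimension is \emph{not} flexsymmetric (Definition~\ref{def:trans-mult}, Corollary~\ref{cor:dualizer}), bending wires on an H-box does not leave it invariant but introduces a $M_j^\top$ twist, so in any rule where an H-box leg is connected ``the wrong way'' one must carefully track the placement of the dualizer and use $(M_j x\mid y) = (x\mid M_j^\top y)$ rather than naive symmetry of $(-\mid-)$. The routine rules bottom out in Lemma~\ref{lemma:sum-on-circle} and bilinearity; the delicate ones require being scrupulous about the orientation of each H-box leg and invoking Proposition~\ref{prop:trans-mult} where appropriate. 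Beyond that, the argument is a finite case check with no conceptual surprises, so I would present the non-trivial cases in full and remark that the remainder are ``identical to'' the corresponding computations in Appendix~\ref{append:arithmetics} and in \cite{roy_qudit_2023}.
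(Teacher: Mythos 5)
Your overall method---unfolding each generator into its defining sum over $\mathbb F_q$, comparing matrix entries, and collapsing internal sums with Lemma~\ref{lemma:sum-on-circle} together with the arithmetic propositions of Appendix~\ref{append:arithmetics}---is exactly how the paper verifies these rules, and for (zs), (hs), (cp), (ba1) and (pm) your plan goes through essentially as the paper's does. (Minor differences: the paper proves (ba1) by first establishing the special case that negation copies and then composing with X-spiders, and (pm) needs no Gauss-sum collapse at all, only the exponent identity $r_1^{(1\mid j)}r_2^{(1\mid j)}=(r_1r_2)^{(1\mid j)}$; your direct expansions would work just as well.)

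However, you have misidentified what several of the rules assert, and with that you lose the specific algebraic inputs their soundness proofs require. The paper's (id) is the resolution of identity $\sum_{j}|j\rangle\langle j|=\mathbf 1$ for a single one-in-one-out spider, not the composite of two H-boxes (the ``four Fourier transforms make the identity'' fact is explicitly a \emph{derived} rule, not one of the nine). (ch) is not a colour-change rule: it encodes that $\mathbb F_q$ has characteristic $p$, and its verification hinges on $p\cdot j=0$ for all $j$, which collapses the left-hand side to $q^{-1/2}\sum_j|0\rangle\langle j|$, i.e.\ a $|0\rangle$ state tensored with the all-ones effect. (spl) encodes that every element of $\mathbb F_q$ is a root of $x^q-x$: by Proposition~\ref{prop:mult} its left-hand side is $\sum_j|j^q\rangle\langle j|$, and this is the identity only because $j^q=j$ (equivalently, Lemma~\ref{lemma:cyclicity_of_unit_group} plus $0^q=0$); neither bilinearity of $(-\mid-)$ nor Lemma~\ref{lemma:sum-on-circle} gives you this, so your plan as stated cannot close these two cases. (ba2) is the bialgebra law for \emph{multiplication} against copying, not a second additive law; the paper handles it by pre/post-composing with the unitary inverse Fourier transform so that Proposition~\ref{prop:mult} applies, though brute-force expansion is also viable. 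Finally, the obstacle you anticipate is a red herring: none of the nine rules bends an H-box leg, so the transpose multiplication $M_j^\top$ and the dualizer of Corollary~\ref{cor:dualizer} never enter; the only genuinely field-theoretic inputs beyond Lemma~\ref{lemma:sum-on-circle} are the characteristic-$p$ identity for (ch) and $x^q=x$ for (spl).
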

\begin{proof}
    (Of the zs rule) The (zs) rule encodes the information that copying is coassociative: 
    \begin{align*}
        \tikzfig{rule-zs-lhs} &= \left(\sum_{j \in \mathbb F_q} |j\rangle ^{\otimes m} \langle j | \right) \circ \left(\sum_{i \in \mathbb F_q} |i \rangle \langle i |^{\otimes n}\right) \\ 
        &= \sum_{j \in \mathbb F_q} |j \rangle ^{\otimes m} \langle j | j \rangle \langle j |^{\otimes n} \\ 
        &= \sum_{j \in \mathbb F_q} |j\rangle ^{\otimes m} \langle j |^{\otimes n} \\ 
        &= \tikzfig{rule-zs-rhs}. 
    \end{align*}
\end{proof}

\begin{proof}
    (Of the hs rule) The (hs) rule encodes the information that multiplication is associative: 
    \begin{align*}
        \tikzfig{rule-hs-lhs} &= q^{-\frac{3}{2}}\sum_{\substack{j_1, ..., j_m \\ i_1, ..., i_n \\ k, l} \in \mathbb F_q} \omega^{\tr(i_1...i_n k ) - \tr(kl) + \tr(l j_1...j_m)} |j_1\rangle ... |j_m\rangle \langle i_1 | ... \langle i_n | \\ 
        &= q^{-\frac{3}{2}}\left(\sum_{j_1, ..., j_m, l} \omega^{\tr(lj_1...j_m)} |j_1\rangle ... |j_m \rangle \langle l |\right) \circ \left(\sum_{i_1, ..., i_m, k, l} \omega^{\tr((i_1...i_m - l )k)} |l\rangle \langle i_1 | ... \langle i_n |\right) \\ 
        &= q^{-\frac{1}{2}}\left(\sum_{j_1, ..., j_m, l} \omega^{\tr(lj_1...j_m)} |j_1\rangle ... |j_m \rangle \langle l |\right) \circ  \left(\sum_{i_1, ..., i_m} |i_1 \cdot ... \cdot i_n \rangle \langle i_1 | ... \langle i_n |\right) \\ 
        &= q^{-\frac{1}{2}} \sum_{\substack{i_1, ..., i_n \\ j_1, ..., j_m}} \omega^{\tr(i_1 ... i_n j_1 ... j_m)} |j_1 \rangle ... |j_m \rangle \langle i_1 | ... \langle i_n | \\ 
        &= \tikzfig{rule-hs-rhs}. 
    \end{align*}
\end{proof}

\begin{proof}
    (Of the id rule) The (id) rule encodes the resolution of identity: 
    $$\tikzfig{rule-id-lhs} = \sum_{j \in \mathbb F_q} |j\rangle \langle j | = \mathbf 1$$
\end{proof}

\begin{proof}
    (Of the ch rule) The (ch) rule encodes the information that the character of $\mathbb F_q$ is $p$: 
    \begin{align*}
        \tikzfig{rule-ch-lhs} &= q^{-\frac{1}{2}}\sum_{j \in \mathbb F_q} |p \cdot j \rangle \langle j | \\ 
        &= q^{-\frac{1}{2}}\sum_{j \in \mathbb F_q} |0\rangle \langle j | \\ 
        &= q^{-1}\left(q^{\frac{1}{2}} |0\rangle \right) \circ \left(\sum_{j \in \mathbb F_q} \langle j |\right) \\
        &= \tikzfig{rule-ch-rhs}
    \end{align*}
\end{proof}

\begin{proof}
    (Of the spl rule) The (spl) rule encodes the information that every element of $\mathbb F_q$ is a root of the polynomial $x^q - x$: 
    \begin{align*}
        \tikzfig{rule-spl-lhs} &\overset{\ref{prop:mult}}{=} \sum_{j \in \mathbb F_q} |j^q \rangle \langle j | \\ 
        &= \sum_{j \in \mathbb F_q} |j \rangle \langle j | \\ 
        &= \mathbf 1. 
    \end{align*}
    where the second equality follows from the fact that every element of $\mathbb F_q$ is a root of the polynomial $x^q - x$. 
\end{proof}

\begin{proof}
    (Of the cp rule) The (cp) rule encodes the information that the Z-spider copies computational basis states: 
    \begin{align*}
        \tikzfig{copy-thru-white-lhs} &= \left(\sum_{i \in \mathbb F_q} |i \rangle |i \rangle \langle i |\right) \circ \left(q^{\frac{1}{2}} |j \rangle\right) \\ 
        &= q^{\frac{1}{2}} |j \rangle | j \rangle \\ 
        &= \tikzfig{copy-thru-white-rhs}. 
    \end{align*}
\end{proof}

\begin{proof}
    (Of the ba1 rule) The (ba1) rule encodes the fact that addition is deterministic, and so commutes with copying. Begin by proving the rule for the special case in which $n = 1$: 
    \begin{align*}
        \tikzfig{neg-copy} &\overset{\ref{prop:neg}}{=} \sum_{j \in \mathbb F_q} |-j\rangle^{\otimes m} \langle j | \\
        &= \tikzfig{copy-neg}. 
    \end{align*}
    Then, from the fact that the x-spiders implement addition (\ref{prop:add}), and the fact that addition is deterministic, we have 
    $$\tikzfig{add-copy} = \tikzfig{copy-add}$$
    Post-composing on both sides with x-spiders we have 
    $$\tikzfig{add-copy-neg}  = \tikzfig{copy-add-neg}$$
    Applying the special case rule for $n = 1$ to the left-hand-side, and simplifying, we get 
    $$\tikzfig{rule-ba1-lhs} = \tikzfig{rule-ba1-rhs}$$
\end{proof}

\begin{proof}
    (Of the ba2 rule) The (ba2) rule encodes the fact that multiplication is deterministic, and so commutes with copying. Since the inverse-Fourier transform is unitary, we just have to prove the equality after post-composing both sides with inverse-Fourier transforms. The left-hand-side becomes 
    \begin{align*}
        \tikzfig{ba2-1} &= \tikzfig{ba2-2} \\ 
        &= \tikzfig{ba2-3} \\ 
        &= \tikzfig{ba2-4} \\ 
        &= \left(\sum_{k} |-k\rangle ^{\otimes m} \langle k |\right) \circ \left(|-i_1\cdot ...\cdot i_n\rangle \langle i_1 | ... \langle i_n |\right) \\ 
        &= \sum_{i_1, ..., i_n} |i_1 \cdot ... \cdot i_n \rangle ^{\otimes m} \langle i_1 | ... \langle i_n | 
    \end{align*}
    While the right-hand-side becomes 
    \begin{align*}
        \tikzfig{ba2-5} &= \sum_{i_1, ..., i_n} |i_1 \cdot ... \cdot i_n \rangle ^{\otimes m} \langle i_1 | ... \langle i_n |
    \end{align*}
\end{proof}

\begin{proof}
    (Of the pm rule) The (pm) rule is a special case of the fact that the Z-spider implements Schur product of states, and thereby serves as a product on the ring of phases $R$: 
    \begin{align*}
        \tikzfig{rule-pm-lhs} &= \sum_{j \in \mathbb F_q} r_1^{\tr(j)} r_2^{\tr(j)} |j \rangle \\ 
        &= \sum_{j \in \mathbb F_q} (r_1 r_2)^{\tr(j)} |j\rangle \\ 
        &= \tikzfig{rule-pm-rhs}
    \end{align*}
\end{proof}

\end{document}